\documentclass[journal]{IEEEtran}
\IEEEoverridecommandlockouts
\usepackage{cite}
\usepackage{amsmath,amssymb,amsfonts,amsthm}
\usepackage{algorithmic}
\usepackage{graphicx}
\usepackage{textcomp}
\usepackage{xcolor}
\usepackage{booktabs}
\usepackage{glossaries}

\newtheorem{theorem}{Theorem}
\newtheorem{lemma}{Lemma}

\newtheorem{definition}{Definition}

\newtheorem{proposition}{Proposition}

\def\BibTeX{{\rm B\kern-.05em{\sc i\kern-.025em b}\kern-.08em
    T\kern-.1667em\lower.7ex\hbox{E}\kern-.125emX}}

\newacronym{csi}{CSI}{channel state information}
\newacronym{cqi}{CQI}{channel quality indicator}
\newacronym{ack}{ACK}{acknowledgement}
\newacronym{arq}{ARQ}{automatic repeat request}
\newacronym{awgn}{AWGN}{additive white Gaussian noise}
\newacronym{cc}{CC}{chase combining}
\newacronym{mmw}{mmWave}{millimeter-wave}
\newacronym{dp}{DP}{dynamic programming}
\newacronym{fec}{FEC}{forward error correction}
\newacronym{harq}{HARQ}{hybrid automatic repeat request}
\newacronym{hspa}{HSPA}{high speed packet access}
\newacronym{per}{PER}{packet error rate}
\newacronym{cper}{CPER}{consecutive packet error rate}
\newacronym{iid}{i.i.d.}{independent and identically distributed}
\newacronym{ir}{IR}{incremental redundancy}
\newacronym{lte}{LTE}{long term evolution}
\newacronym{mdp}{MDP}{markov decision process}
\newacronym{mrc}{MRC}{maximal-ratio combining}
\newacronym{nack}{NAK}{negative acknowledgement}
\newacronym{pdf}{pdf}{probability density function}
\newacronym{wimax}{WiMax}{worldwide interoperability for microwave access}
\newacronym{3gpp}{3GPP}{3rd generation partnership project}
\newacronym{ofdm}{OFDM}{orthogonal frequency-division multiplexing}
\newacronym{ofdma}{OFDMA}{orthogonal frequency-division multiple access}
\newacronym{wlan}{WLAN}{wireless local area network}
\newacronym{mse}{MSE}{mean-squared error}
\newacronym{mmse}{MMSE}{minimum mean-squared error}
\newacronym{gsm}{GSM}{global system for mobile communications}
\newacronym{edge}{EDGE}{enhanced data \gls{gsm} environment}
\newacronym{stbc}{STBC}{space-time block code}
\newacronym{amc}{AMC}{adaptive modulation and coding}
\newacronym{snr}{SNR}{signal to noise ratio}
\newacronym{sinr}{SINR}{signal to interference and noise ratio}
\newacronym{mi}{MI}{mutual information}
\newacronym{acmi}{ACMI}{accumulated mutual information}
\newacronym{nacmi}{NACMI}{normalized ACMI}
\newacronym{cdi}{CDI}{channel distribution information}
\newacronym{latr}{LATR}{long-term average transmission rate}
\newacronym{rtr}{RTR}{round transmission rate}
\newacronym{pomdp}{POMDP}{Partially Observable Markov Decision Process}
\newacronym{fd}{FD}{full-duplex}
\newacronym{hd}{HD}{half-duplex}
\newacronym{td}{TD}{time division}
\newacronym{la}{LA}{link adaptation}
\newacronym{tdma}{TDMA}{time-division multiple access}
\newacronym{mac}{MAC}{Media Access Control}
\newacronym{uwb}{UWB}{Ultra Wideband}
\newacronym{ieee}{IEEE}{institute of electrical and electronics engineers}
\newacronym{dB}{dB}{decibel}
\newacronym{cdf}{cdf}{cumulative density function}
\newacronym{ccdf}{ccdf}{complementary cumulative density function}
\newacronym{min}{Min.}{minimum}
\newacronym{med}{Med.}{median}
\newacronym{avg}{Avg.}{average}
\newacronym{ul}{UL}{up-link}
\newacronym{dl}{DL}{down-link}
\newacronym{app}{APP}{a-posteriori probability}
\newacronym{logmap}{LogMAP}{log maximum a-posteriori}
\newacronym{llr}{LLR}{log-likelihood ratio}
\newacronym{ue}{UE}{user equipment}
\newacronym{ai}{AI}{artificial intelligence}
\newacronym{qos}{QoS}{quality of service}
\newacronym{6g}{6G}{sixth generation}
\newacronym{5g}{5G}{fifth generation}
\newacronym{4g}{4G}{fourth generation}
\newacronym{tti}{TTI}{transmission time interval}
\newacronym{rrm}{RRM}{radio resource management}
\newacronym{mmib}{MMIB}{mean mutual information per bit}
\newacronym{dsi}{DSI}{decoder state information}
\newacronym{tb}{TB}{transport block}
\newacronym{isi}{ISI}{inter-symbol interference}
\newacronym{tbs}{TBS}{transport block size}
\newacronym{cb}{CB}{code block}
\newacronym{cbg}{CBG}{code block group}
\newacronym{cbs}{CBS}{code block size}
\newacronym{prb}{PRB}{physical resource block}
\newacronym{rb}{RB}{resource block}
\newacronym{bler}{BLER}{block error rate}
\newacronym{blep}{BLEP}{block error probability}
\newacronym{crc}{CRC}{cyclic redundancy check}
\newacronym{tdd}{TDD}{time division duplex}
\newacronym{fdd}{FDD}{frequency division duplex}
\newacronym{embb}{eMBB}{enhanced mobile broadband}
\newacronym{mbb}{MBB}{mobile broadband}
\newacronym{mcc}{MCC}{mission critical communication}
\newacronym{mmc}{MMC}{massive machine communication}
\newacronym{mtc}{MTC}{machine type of communication}
\newacronym{mmtc}{mMTC}{massive machine type of communication}
\newacronym{umtc}{uMTC}{ultra-reliable \gls{mtc}}
\newacronym{urllc}{URLLC}{ultra-reliable low-latency communications}
\newacronym{rtt}{RTT}{round trip time}
\newacronym{rs}{RS}{reference symbols}
\newacronym{kpi}{KPI}{key performance indicator}
\newacronym{kpis}{KPIs}{key performance indicators}
\newacronym{tx}{Tx}{transmitter node}
\newacronym{rx}{Rx}{receiver node}
\newacronym{cran}{C-RAN}{centralized radio access network}
\newacronym{rru}{RRU}{remote radio unit}
\newacronym{bbu}{BBU}{baseband unit}
\newacronym{fhd}{FHD}{fronthaul delay}
\newacronym{cch}{CCH}{control channel}
\newacronym{saw}{SAW}{stop-and-wait}
\newacronym{qci}{QCI}{\gls{qos} class identifier}
\newacronym{gbr}{GBR}{guaranteed bit rate}
\newacronym{mbr}{MBR}{maximum bit rate}
\newacronym{ngbr}{non-GBR}{non-\gls{gbr}}
\newacronym{arp}{ARP}{allocation and retention priority}
\newacronym{effcr}{ECR}{effective coding rate}
\newacronym{mcs}{MCS}{modulation and coding scheme}
\newacronym{eva}{EVA}{extended vehicular A}
\newacronym{epa}{EPA}{extended pedestrian A}
\newacronym{etu}{ETU}{extended typical urban}
\newacronym{re}{RE}{resource element}
\newacronym{reS}{REs}{resource elements}
\newacronym{nr}{NR}{new radio}
\newacronym{qpsk}{QPSK}{quadrature phase shift keying}
\newacronym{qam}{QAM}{quadrature amplitude modulation}
\newacronym{siso}{SISO}{single-input and single-output}	
\newacronym{miso}{MISO}{multiple-input single-output}
\newacronym{mimo}{MIMO}{multiple-input multiple-output}
\newacronym{bs}{BS}{base station}
\newacronym{phy}{PHY}{physical layer}
\newacronym{rlc}{RLC}{radio link control}
\newacronym{bcfsaw}{BCF-SAW}{BCF-SAW}
\newacronym{bcf}{BCF}{backwards composite feedback}
\newacronym{bac}{BAC}{binary asymmetric channel}
\newacronym{bsc}{BSC}{binary symmetric channel}
\newacronym{dtx}{DTX}{discontinued transmission}
\newacronym{bpsk}{BPSK}{binary phase shift keying}
\newacronym{bep}{BEP}{bit error probability}
\newacronym{ndi}{NDI}{new data indicator}
\newacronym{dci}{DCI}{downlink control information}
\newacronym{csit}{CSIT}{channel state information at the transmitter}
\newacronym{lt}{LT}{loudest talker}
\newacronym{ct}{CT}{cooperative transmission}
\newacronym{bps}{bps}{bits per second}
\newacronym{bpcu}{bpcu}{bits per channel use}
\newacronym{los}{LOS}{line-of-sight}
\newacronym{nlos}{NLOS}{non-line-of-sight}
\newacronym{regsaw}{Reg-SAW}{Regular SAW}
\newacronym{Lrep}{$L$-Rep-ACK}{Increased feedback repetition order}
\newacronym{Lack}{$L$-ACK-SAW}{$L$ required ACK per packet}
\newacronym{Asym}{Asym-SAW}{Asymmetric feedback detection for SAW}
\newacronym{bretx}{Blind-ReTx}{Blind retransmission}
\newacronym{af}{AF}{amplify-and-forward}
\newacronym{ap}{AP}{access point}
\newacronym{icn}{ICN}{industrial control network}
\newacronym{comp}{CoMP}{coordinated multi-point}
\newacronym{rhs}{RHS}{right hand side}
\newacronym{lhs}{LHS}{left hand side}
\newacronym{sumiso}{SU-MISO}{single-user multiple-input-single-output}
\newacronym{ibl}{IBL}{infinite block length}
\newacronym{fbl}{FBL}{finite block length}
\newacronym{lan}{LAN}{local area network}
\newacronym{wsn}{WSN}{wireless sensor network}
\newacronym{rt}{RT}{real-time}
\newacronym{tdm}{TDM}{time division multeplxing}
\newacronym{nist}{NIST}{National Institute of Standards and Technology}
\newacronym{cbrs}{CBRS}{Citizens Broadband Radio Service}
\newacronym{itu}{ITU}{International Telecommunications Union}
\newacronym{mmwave}{mmWave}{millimeter-wave}
\newacronym{nsr}{NSR}{noise-to-signal ratio}
\newacronym{das}{DAS}{distributed antenna system}
\newacronym{pmf}{PMF}{probability mass function}
\newacronym{srs}{SRS}{sounding reference signal}
\newacronym{dmrs}{DMRS}{demodulation reference signal}
\newacronym{psd}{PSD}{power spectral density}
\newacronym{rf}{RF}{radio frequency}
\newacronym{id}{ID}{identifier}
\newacronym{df}{DF}{decode-and-forward}
\newacronym{iot}{IoT}{internet of things}
\newacronym{iiot}{IIoT}{industrial internet-of-things}
\newacronym{iab}{I-AB}{integrated access and backhaul}
\newacronym{icsi}{I-CSI}{imperfect CSI}
\newacronym{pcsi}{P-CSI}{perfect CSI}
\newacronym{andcoop}{ANDCoop}{adaptive network-device cooperation}
\newacronym{dmt}{DMT}{diversity-multiplexing tradeoff}
\newacronym{scs}{SCS}{sub-carrier spacing}
\newacronym{trp}{TRP}{transmission reception point}
\newacronym{e2e}{E2E}{end-to-end}
\newacronym{d2d}{D2D}{device-to-device}
\newacronym{prose}{ProSe}{proximity service}
\newacronym{im}{IM}{interference management}
\newacronym{aoi}{AoI}{age-of-information}
\newacronym{v2v}{V2V}{vehicle-to-vehicle}
\newacronym{cdd}{CDD}{cyclic delay diversity}
\newacronym{ldpc}{LDPC}{low-density parity-check }
\newacronym{InF}{InF}{indoor factory}
\newacronym{scm}{SCM}{spatial channel modeling}
\newacronym{cu}{CU}{central unit}
\newacronym{du}{DU}{distributed unit}
\newacronym{sl}{SL}{side-link}
\newacronym{ris}{RIS}{reflective intelligent surface}
\newacronym{dvbt}{DVB-T}{digital video broadcasting-terrestrial}
\newacronym{sa}{SA}{service availability}
\newacronym{sfn}{SFN}{single-frequency network}
\newacronym{twc}{TWC}{tactile wireless control}
\newacronym{ncr}{NCR}{network-controlled repeater}
\newacronym{map}{MAP}{maximum a posteriori}
\newacronym{jscc}{JSCC}{joint source-channel coding}
\newacronym{vqvae}{VQ-VAE}{Vector Quantized-Variational Autoencoder}

\begin{document}
\title{Generative Decompression: Optimal Lossy Decoding Against Distribution Mismatch}

\author{Saeed~R.~Khosravirad,~\IEEEmembership{Senior Member,~IEEE,}
        Ahmed Alkhateeb,~\IEEEmembership{Fellow,~IEEE,}
        and~Ingrid~van~de~Voorde,~\IEEEmembership{Member,~IEEE}
\thanks{Saeed R. Khosravirad is with the Nokia Bell Laboratories, Murray Hill, NJ 07974 USA e-mail: (saeed.khosravirad@nokia-bell-labs.com).}
\thanks{Ahmed Alkhateeb is with the Wireless Intelligence Laboratory and the School of Electrical, Computer and Energy Engineering, Arizona State University, Tempe, AZ 85281 USA (e-mail:alkhateeb@asu.edu).}
\thanks{Ingrid van de Voorde is with the Nokia Bell Laboratories, Antwerpen, Belgium (email:ingrid.van\_de\_voorde@nokia-bell-labs.com)}}


\maketitle
\begin{abstract}
This paper addresses optimal decoding strategies in lossy compression where the assumed distribution for compressor design mismatches the actual (true) distribution of the source. This problem has immediate relevance in standardized communication systems where the decoder acquires side information or priors about the true distribution that are unavailable to the fixed encoder. We formally define the \emph{mismatched quantization problem}, demonstrating that the optimal reconstruction rule, termed \emph{generative decompression}, aligns with classical Bayesian estimation by taking the conditional expectation under the true distribution given the quantization indices and adapting it to fixed-encoder constraints. This strategy effectively performs a generative Bayesian correction on the decoder side, strictly outperforming the conventional centroid rule. We extend this framework to transmission over noisy channels, deriving a robust soft-decoding rule that quantifies the inefficiency of standard modular source--channel separation architectures under mismatch. Furthermore, we generalize the approach to task-oriented decoding, showing that the optimal strategy shifts from conditional mean estimation to maximum a posteriori (MAP) detection. Experimental results on Gaussian sources and deep-learning-based semantic classification demonstrate that generative decompression closes a vast majority of the performance gap to the ideal joint-optimization benchmark, enabling adaptive, high-fidelity reconstruction without modifying the encoder.
\end{abstract}

\begin{IEEEkeywords}
Quantization, mismatch, distortion minimization, 
 Joint Source-Channel Coding, generative decoding.
\end{IEEEkeywords}

\section{Introduction}
Lossy data compression is a foundational building block in communication and storage of media and telemetry. In a nutshell, continuous-valued random variables are mapped to a finite set of discrete levels to enable efficient representation \cite{GrayNeuhoff1998}. Typically, a compressor is designed to minimize distortion, e.g., \gls{mse}, under an assumed source distribution. However, in practical scenarios, the \emph{true} source distribution may deviate from the design assumption due to estimation errors, non-stationarity, or model inaccuracies \cite{gray2003mismatch}.

When such distribution mismatch is known to the receiver, the decoder can exploit this side information to adjust its reconstruction rule and reduce distortion. This concept is rooted in classical information theory, specifically the Wyner-Ziv theorem on source coding with side information at the decoder~\cite{1055508}, but it has gained renewed relevance in more recent research and development efforts regarding generative and semantic communication.

\subsection{Motivating Scenarios}
\paragraph{Context-Aware Semantic Decoding}
In semantic compression, the decoder's goal may shift over time. While the message was encoded for signal fidelity, the receiver may be interested in specific semantic features relevant to a downstream task (e.g., classification) \cite{GulerYener2014}. A task-aware decoder can adjust reconstructions to minimize semantic error~\cite{8493595}.

\paragraph{CSI Compression}
In wireless communications, \gls{csi} is often compressed by a standardized encoder assuming a generic channel model. The base station receiver, however, learns the specific channel statistics of the user over time. By updating the decoder to reflect these learned statistics, the reconstruction error of the CSI feedback can be significantly reduced~\cite{luo2025generative}.

\paragraph{Federated Learning and Broadcasting}
Consider a broadcasting with receiver side-information scenario. A central server (Encoder) broadcasts a model or signal compressed using a global codebook optimized for the average user statistics. However, each receiver possesses distinct, local statistical characteristics. Each decoder can leverage its local \emph{true} distribution to reinterpret the received quantization indices, effectively personalizing the decompression without altering the bitstream.

\subsection{Literature Review}

The problem of optimal coding under statistical mismatch has roots in classical information theory. Lapidoth \cite{lapidoth1997role} characterized the mismatch capacity of channel decoding, while Gray and Linder \cite{gray2003mismatch} extended this to high-rate vector quantization, proving that the penalty for designing a quantizer on a mismatched distribution is determined by the relative entropy (Kullback-Leibler divergence) between the design and true distributions.

This framework relates closely to source coding with side information. The seminal Slepian-Wolf theorem \cite{slepian1973noiseless} and its lossy extension by Wyner and Ziv \cite{1055508} established that for correlated sources, side information available only at the decoder allows for compression efficiency approaching that of the informed-encoder case. This theoretical possibility of shifting complexity to the receiver is the basis for modern distributed source coding strategies \cite{xiong2004distributed, pradhan2003distributed}.

In the deep learning era, handling distribution mismatch has shifted from analytical point-density optimizations to data-driven generative modeling. Blau and Michaeli \cite{blau2018perception} formally defined the perception-distortion tradeoff, proving that minimizing distortion alone leads to statistically unlikely reconstructions. To traverse this optimal frontier, Tschannen et al. \cite{tschannen2018deep} proposed distribution-preserving lossy compression (DPLC), where the decoder is constrained to output samples from the true data manifold. This generative approach is further supported by adversarial training frameworks \cite{agustsson2019generative, mentzer2020high}, which ensure perceptual high fidelity at low bitrates. These methods align with advances in neural discrete representation learning \cite{vandenoord2017neural} and end-to-end optimized compression \cite{balle2017end}, which learn robust codes that can generalize across distributional shifts.

These principles are increasingly applied to semantic and task-oriented communications \cite{gunduz2022beyond, qin2021semantic, strinati20216g}. Here, the decoder optimizes for a downstream task utility rather than signal fidelity, often mitigating the impact of channel or source mismatch through semantic resilience. A critical application of this paradigm is CSI feedback in massive \gls{mimo}. While deep learning-based compression like CsiNet \cite{wen2018deep} has shown promise, recent work \cite{luo2025generative} demonstrates that generative decoding at the base station—leveraging site-specific digital twin data as side information—can correct for the quantization artifacts of standardized user equipment encoders. Finally, finite-blocklength analyses by Zhou et al. \cite{zhou2019dispersion} and recent Deep Joint Source-Channel Coding (Deep JSCC) advances \cite{bourtsoulatze2019deep} illustrate that under distribution mismatch and finite blocklength, strictly modular source–channel architectures can incur a non-negligible performance penalty, motivating joint source–channel decoding strategies even though Shannon's separation theorem remains asymptotically optimal under its standard joint-design assumptions.

\subsection{Contributions}
Despite these advances, a critical gap remains in applying these principles to systems with rigid, standardized encoders. Most neural compression frameworks require joint training of encoder and decoder, rendering them incompatible with legacy hardware or fixed protocols like \gls{5g} CSI feedback. Conversely, while classical mismatch theory describes the fundamental limits of sub-optimal coding, it offers little prescription for practical, high-dimensional reconstruction under severe shifts. This paper addresses this by formulating decoding as a Bayesian inference problem where the encoder is immutable. We propose \emph{generative decompression} to demonstrate that a purely decoder-side adjustment—leveraging generative priors of the true distribution—can recover significant performance lost to distribution mismatch without requiring modifications to the transmitter. Throughout this work, we use the term ``generative decompression'' to denote Bayesian reconstruction using a generative prior over the source, rather than stochastic sample generation as in GAN or diffusion models.
The main contributions of this paper are as follows:

\begin{itemize}
    \item We formalize the \emph{mismatched quantization problem} where the quantizer is fixed to a design distribution $P_X^{(d)}$, but the decoder knows the true distribution $P_X^{(t)}$. We show that the optimal reconstruction rule—termed \emph{generative decompression}—aligns with the classical \gls{mmse} estimator, specifically the conditional expectation under the true distribution. This simple Bayesian correction strictly outperforms the conventional centroid rule whenever mismatch is present. While the resulting estimator is mathematically equivalent to the classical \gls{mmse} solution, the novelty lies in applying it as a strictly decoder-side patch to standardized, immutable quantization protocols.

    \item We extend the framework to noisy channels and derive the optimal soft-decoding rule, which inseparably couples the true source priors with channel reliability statistics. We quantify the excess distortion incurred by standard modular source--channel separation architectures with hard-decision interfaces under distribution mismatch and show how a joint soft-decoding rule removes this penalty.

    \item We provide closed-form expressions and high-rate asymptotic theory for Gaussian and Laplace mismatch scenarios. We identify two distinct regimes—central mismatch (modest, vanishing gains) and tail mismatch (large, rate-persistent gains)—and prove that generative decompression removes distortion floors caused by overload bias, recovering the ideal $O(2^{-2b})$ decay in many practical cases.

    \item We show that the same conditional-expectation principle extends to arbitrary task losses, yielding task-optimal reconstructions instead of \gls{mse}. We demonstrate the framework on \gls{5g} \gls{csi} feedback with a weighted-MSE beamforming loss and on deep \gls{vqvae} models for semantic classification, showing significant decoder-only gains in both analytic and high-dimensional learned settings.
\end{itemize}

\subsection{Paper Organization}
The rest of the paper is organized as follows.
Section~\ref{sec:problem} formalizes the mismatched quantization problem, introduces notation, and details the generative decompression optimization for the noiseless case.
Section~\ref{sec:scalar} specializes the theory to scalar quantization, derives explicit distortion expressions for Gaussian and Laplace sources, and presents high-rate asymptotic analysis together with numerical validation.
Section~\ref{sec:noisy} extends the framework to transmission over noisy channels, and rigorously quantifies the inefficiency of modular source–channel separation under mismatch when the encoder is fixed.
Section~\ref{sec:taskaware} generalizes the approach to arbitrary task-specific distortion measures, with analytic insights for CSI beamforming feedback and large-scale experimental validation for semantic classification under class and domain shifts.
Finally, Section~\ref{sec:conclusion} concludes the paper and discusses broader implications.

\section{Problem Formulation}
\label{sec:problem}

\subsection{Source Models and Notation}
Let $X$ be a real-valued random variable. The \emph{true} law of $X$ is denoted by $P_X^{(t)}$, with PDF $f_t(x)$. The encoder designs the quantizer under a possibly mismatched \emph{design} law $P_X^{(d)}$ with PDF $f_d(x)$. Throughout, we use the shorthand:
\[
\mathbb{E}^{(t)}[\cdot] \triangleq \mathbb{E}_{X \sim P_X^{(t)}}[\cdot], \quad
\mathbb{P}^{(t)}(\cdot) \triangleq \mathbb{P}_{X \sim P_X^{(t)}}(\cdot),
\]
and similarly for the design law, and use $\mathbb{P}$ as probability measure.

In practice, the decoder does not have analytic access to $P_X^{(t)}$ but can infer it from long-term observations, side information, or site-specific digital twins. We therefore view $P_X^{(t)}$ as a statistical model available at the decoder but not at the fixed encoder, which is designed once under $P_X^{(d)}$.

\subsection{Quantizer Designed Under the Mismatched Model}
The encoder designs a $b$-bit scalar quantizer under $P_X^{(d)}$. Let $N = 2^b$, and let $Q_d : \mathbb{R} \to \{1,\dots,N\}$ be the quantizer, inducing partition $\{\mathcal{R}_i\}_{i=1}^N$.
The optimal reconstruction values under $P_X^{(d)}$ are the centroids:
\begin{align}
a_i^d &= \mathbb{E}_{d}[X \mid X \in \mathcal{R}_i].
\end{align}

\subsection{Generative Decompression Optimization}
The source follows $f_t(x)$, yielding index $I = Q_d(X)$. The decoder seeks a mapping $g_t(i) = a_i^t$ to minimize the expected distortion under the true distribution, given the fixed partition boundaries determined by the encoder:
\begin{align}
D_t(\mathbf{a}^t) &\triangleq \mathbb{E}_{t}\big[(X - a_I^t)^2\big].
\end{align}

\begin{proposition}
\label{prop:optimal-recon-true}
The reconstruction points that minimize $D_t(\mathbf{a}^t)$ correspond to the MMSE estimator under the true distribution:
\begin{align}
a_i^{t\star} &= \mathbb{E}_{t}[X \mid X \in \mathcal{R}_i].
\end{align}
The minimal distortion is
\begin{align}
D_{\min} &= \sum_{i=1}^N \mathbb{P}_{t}(I = i) \, \mathrm{Var}_{t}[X \mid I = i].
\end{align}
\end{proposition}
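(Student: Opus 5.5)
The plan is to decompose the true-law distortion over the cells of the fixed partition and then minimize each cell's contribution separately. The partition $\{\mathcal{R}_i\}$ is fixed by the encoder, and $I=Q_d(X)$ is a deterministic function of $X$. By the law of total expectation under $P_X^{(t)}$,
\begin{align}
D_t(\mathbf{a}^t) = \sum_{i=1}^N \mathbb{P}_t(I=i)\,\mathbb{E}_t\big[(X-a_i^t)^2 \mid X\in\mathcal{R}_i\big].
\end{align}
Cells with $\mathbb{P}_t(I=i)=0$ contribute nothing, so their reconstruction point is arbitrary. Otherwise, each summand depends only on its own $a_i^t$, so the joint minimization over $\mathbf{a}^t\in\mathbb{R}^N$ splits into $N$ independent scalar problems.

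For each cell with positive true probability, write $\mu_i=\mathbb{E}_t[X\mid X\in\mathcal{R}_i]$. We need this conditional mean and the conditional second moment to be finite, which holds whenever $\mathbb{E}_t[X^2]<\infty$; this is the standing assumption that makes $D_t$ finite. Then apply the bias--variance identity
\begin{align}
\mathbb{E}_t\big[(X-a)^2\mid X\in\mathcal{R}_i\big]=\mathrm{Var}_t[X\mid I=i]+(a-\mu_i)^2 .
\end{align}
It follows by expanding $X-a=(X-\mu_i)+(\mu_i-a)$ and noting that the cross term has zero conditional mean. This identity shows at once that $a=\mu_i$ is the unique minimizer. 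That gives $a_i^{t\star}=\mathbb{E}_t[X\mid X\in\mathcal{R}_i]$, the MMSE estimator of $X$ given $I$ under the true law.

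Substituting the minimizers back leaves only the variance terms, which yields $D_{\min}=\sum_i \mathbb{P}_t(I=i)\,\mathrm{Var}_t[X\mid I=i]$. The design centroids $a_i^d$ are generally different from $\mu_i$ under mismatch. The same identity therefore also quantifies their excess distortion, $\sum_i \mathbb{P}_t(I=i)(a_i^d-\mu_i)^2$, which motivates the "strictly outperforms" remark, though the proposition does not require it.

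None of these steps is a real obstacle; the argument is routine. The only point needing care is the measure-theoretic bookkeeping: handling zero-probability cells, where the conditional expectation is undefined and any value is optimal, and ensuring finite second moments so that the bias--variance expansion is legitimate.
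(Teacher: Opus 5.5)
Your proposal is correct and follows essentially the same route as the paper, which conditions on each quantization bin and invokes the fact that the conditional mean minimizes squared error (the orthogonality principle). You make the per-bin decoupling and the bias--variance identity explicit, and you add the routine care about zero-probability cells and finite second moments that the paper's one-line proof leaves out.
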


\begin{IEEEproof}
This result is a direct application of the orthogonality principle of minimum mean-square estimation. For any squared-error cost function, the optimal estimator is the conditional expectation of the random variable given the observation. Here, the observation is the event $\{X \in \mathcal{R}_i\}$. Applying the expectation conditionally to each quantization bin yields the result.
\end{IEEEproof}

The optimal reconstruction rule is effectively the classical Lloyd-Max centroid condition applied under the true distribution rather than the design one. The excess distortion is $\Delta D \triangleq D_t(\mathbf{a}^d) - D_{\min} \ge 0$.

\section{Case Study: Scalar Quantization}
\label{sec:scalar}

\subsection{One-Bit Quantization (Gaussian True Source)}

We specialize the formulation to Gaussian sources. Let the design law be $X \sim \mathcal{N}(\mu_0,\sigma_0^2)$ and the true law be $X \sim \mathcal{N}(\mu_1,\sigma_1^2)$.

The optimal 1-bit quantizer for the design law has a threshold at the mean $\tau = \mu_0$ and reconstruction points:
\[
a_{1,2}^d = \mu_0 \mp \sigma_0 \sqrt{\frac{2}{\pi}}.
\]

\subsubsection{Unadjusted Decoder Distortion}
When this quantizer is applied to the true source $X \sim \mathcal{N}(\mu_1,\sigma_1^2)$, the normalized threshold is $\alpha \triangleq (\mu_0 - \mu_1)/\sigma_1$. The distortion using the unadjusted design centroids is:
\begin{equation}
D_t(\mathbf{a}^d) = \sigma_1^2 \mathbb{E}_{Z}\left[ (Z - Q'(Z))^2 \right],
\end{equation}
where $Q'(Z)$ maps the standardized variable $Z$ to normalized design centroids. Using truncated normal moments, this expands to:
\begin{align}
D_t(\mathbf{a}^d) &= \sigma_1^2 \Big( \Phi(\alpha)[1 - \alpha\lambda_L + 2a_1'\lambda_L + (a_1')^2] \nonumber \\
&\quad + (1-\Phi(\alpha))[1 + \alpha\lambda_R - 2a_2'\lambda_R + (a_2')^2] \Big),
\end{align}
where  $\phi(\cdot)$ and $\Phi(\cdot)$ denote the standard normal PDF and CDF, respectively, and $\lambda_L = \frac{\phi(\alpha)}{\Phi(\alpha)}$ and $\lambda_R = \frac{\phi(\alpha)}{1-\Phi(\alpha)}$ are the Inverse Mills Ratios.

\subsubsection{Optimal Decoder Adjustment}
Using Prop. \ref{prop:optimal-recon-true}, the optimal decoder adjusts the reconstruction points to the true conditional means:
\begin{align}
a_1^{t\star} &= \mathbb{E}[X \mid X < \mu_0] = \mu_1 - \sigma_1 \lambda_L, \\
a_2^{t\star} &= \mathbb{E}[X \mid X \ge \mu_0] = \mu_1 + \sigma_1 \lambda_R.
\end{align}
The minimal achievable distortion $D_{\min}$ is given by the weighted sum of conditional variances:
\begin{align}\nonumber
D_{\min} =  \sigma_1^2 \Big(  & \Phi(\alpha)[1 - \alpha\lambda_L - \lambda_L^2] \\ 
& + (1-\Phi(\alpha))[1 + \alpha\lambda_R - \lambda_R^2] \Big).
\end{align}

\subsubsection{Case of Variance Mismatch}\label{sec:2overpi}
Consider the case where means are matched ($\mu_0=\mu_1=0$) but variances differ ($\sigma_1 \neq \sigma_0$). Here $\alpha=0$ and $\lambda_L=\lambda_R=\sqrt{2/\pi}$.
The unadjusted distortion is $D_t(\mathbf{a}^d) = \sigma_1^2(1 + \frac{2}{\pi\sigma_1^2})$.
The optimal distortion is $D_{\min} = \sigma_1^2(1 - \frac{2}{\pi})$.
The relative gain from decoder adaptation is:
\begin{equation}
\text{Gain} = 1 - \frac{D_{\min}}{D_t(\mathbf{a}^d)} = 1 - \frac{1 - 2/\pi}{1 + 2/(\pi\sigma_1^2)}.
\end{equation}
As $\sigma_1 \to \infty$ (severe under-estimation of variance by the encoder), the gain approaches $2/\pi \approx 64\%$. This  result proves that simple decoder-side rescaling can recover a majority of the coding loss in high-variance mismatch scenarios. Numerical experiments presented later in this section 
demonstrate that the $2/\pi$ limit is applicable to the case of Gaussian variance mismatch regardless of the number of quantization bins.

\subsection{Multi-Bit Quantization and High-Rate Analysis}
For $b > 1$ bits, the quantization partitions the real line into bounded inner cells and semi-infinite outer cells (overload regions). Let the encoder's support region be $S_d = [\tau_0, \tau_{N-1}]$ and the overload region be $\mathcal{O}_d = \mathbb{R} \setminus S_d$.
To analyze the distortion for large $N$, we decompose the total distortion into granular (inner) and overload (outer) components:
\begin{equation}
    D \approx \underbrace{\frac{1}{12 N^2} \int_{S_d} \frac{f_t(x)}{\lambda_d(x)^2} dx}_{D_{\text{granular}}} + \underbrace{\sum_{i \in \{1, N\}} \int_{\mathcal{R}_i \cap \mathcal{O}_d} (x - a_i)^2 f_t(x) dx}_{D_{\text{overload}}}.
\end{equation}
The granular term $D_{\text{granular}}$ is the Bennett integral~\cite{391237} where $\lambda_d(x)$ denotes the point density of the fixed encoder.  $D_{\text{granular}}$ decays as $O(N^{-2})$ regardless of the decoding rule, provided the density function is smooth. The asymptotic performance is therefore governed by $D_{\text{overload}}$.
Let us analyze the outermost bin $\mathcal{R}_N = (\tau_{N-1}, \infty)$. Let $\mu_{\text{tail}}^{(t)} = \mathbb{E}_t[X | X > \tau_{N-1}]$ be the conditional mean of the tail under the true distribution.
Let $p_i^{(t)} \triangleq \mathbb{P}_{t}(X \in \mathcal{R}_i)$ denote the true probability that the source $X$ falls into the $i$-th bin $\mathcal{R}_i$.
Using the bias-variance decomposition of MSE~\cite{räisä2025biasvariancedecompositionensemblesmultiple}, the distortion in this bin for an arbitrary reconstruction value $a_N$ is:
\begin{align}
    D_{\text{overload}}^{(N)} &= \mathbb{P}_t(\mathcal{R}_N) \cdot \mathbb{E}_t [ (X - a_N)^2 \mid X \in \mathcal{R}_N ] \nonumber \\
    &= p_N^{(t)} \left( \mathrm{Var}_t[X \mid X \in \mathcal{R}_N] 
    + (\mu_{\text{tail}}^{(t)} - a_N)^2\right).
\end{align}
The variance term is irreducible while the squared bias term $(\mu_{\text{tail}}^{(t)} - a_N)^2$ depends on the decoder.
\subsubsection{Unadjusted Decoder}
The fixed encoder sets $a_N = a_N^d$, the centroid for the \emph{design} distribution. If $f_t$ has heavier tails than $f_d$ (e.g., Laplace vs. Gaussian, or $\sigma_1 \gg \sigma_0$), the true mass shifts significantly outward such that $\mu_{\text{tail}}^{(t)} \gg a_N^d$. This results in a large, strictly positive bias term $(\mu_{\text{tail}}^{(t)} - a_N^d)^2$ which dominates the distortion as the granular error vanishes with increasing rate.
\subsubsection{Generative Decoder}
The proposed decoder sets $a_N = a_N^{t\star} = \mu_{\text{tail}}^{(t)}$. This choice sets the bias term to zero. The remaining distortion for the generative decoder is strictly the conditional variance.
\begin{equation}
    D_{\text{overload}} = p_N^{(t)} \mathrm{Var}_t[X \mid X > \tau_{N-1}].
\end{equation}
This leads to two distinct asymptotic regimes for the generative decompression:
\begin{itemize}
    \item \textbf{Central Mismatch:} If tails are matched (i.e., probability mass in the tail decays at the same rate), the bias term in the unadjusted distortion is negligible. The dominant error is $D_{\text{granular}}$, so the relative gain of adaptation diminishes as $O(N^{-2})$ for large $b$.
    \item \textbf{Tail Mismatch:} If $f_t$ is heavy-tailed relative to $f_d$, the bias term from the unadjusted decoder does not vanish (or decays slower than $N^{-2}$), effectively creating an error floor. Generative decompression removes this floor, allowing the decoder to virtually extend the dynamic range of the quantizer, yielding gains that persist or grow with $N$.
\end{itemize}

\begin{figure}[t]
    \centering
    \includegraphics[width=0.48\textwidth]{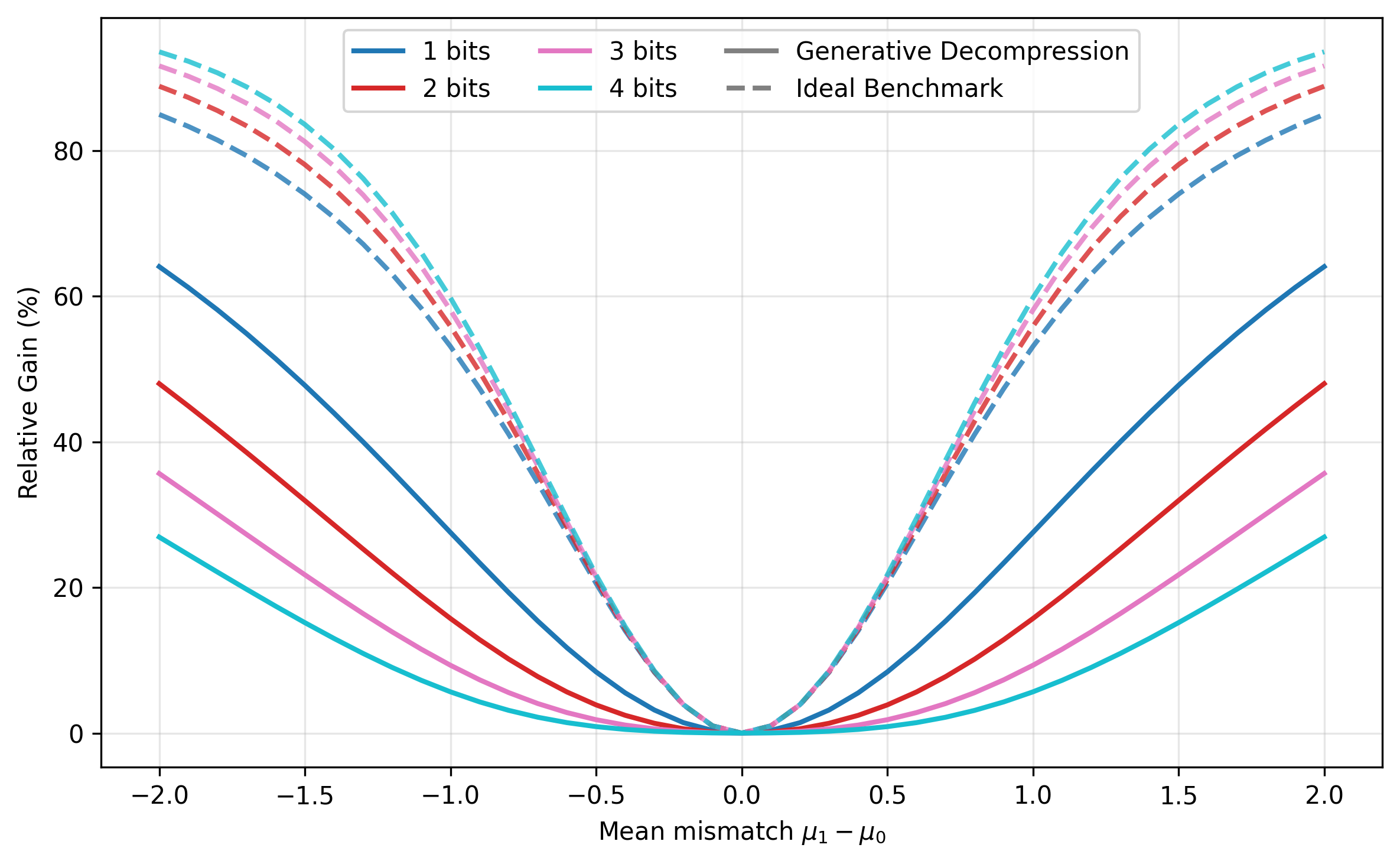}
    \caption{Decoder-side gain vs.\ mean mismatch for Gaussian sources. Quantizers designed for $\mathcal{N}(0,1)$, evaluated on $\mathcal{N}(\mu_1, 1)$.}
    \label{fig:exp1_mean}
\end{figure}

To quantify the distortion reduction predicted above, we performed experiments with a fixed encoder designed for the standard Gaussian $\mathcal{N}(0,1)$ using the Lloyd–Max algorithm (60 iterations, $3\times10^{5}$ training samples). We tested bit depths $b \in \{1,2,3,4\}$ ($N \in \{2,4,8,16\}$). Performance is reported as 
\begin{equation}\label{eq:relgain}
    \text{Relative Gain} = \bigl(1 - \frac{D_{\min}}{D_{\text{std}}}\bigr) \times 100,
\end{equation}
where $D_{\text{std}}$ uses the original design centroids and $D_{\min}$ uses the optimal true conditional means.

\paragraph{Mean drift}
Figure~\ref{fig:exp1_mean} shows the relative gain when the true source is $\mathcal{N}(\mu_1,1)$, $\mu_1 \in [-2,2]$.
The gain is negligible near $\mu_1=0$ but rises symmetrically to nearly 60\% at $|\mu_1|=2$. This confirms that decoder adaptation effectively re-centers the reconstruction when the source drifts. The ideal benchmark case is also demonstrated (see Sec.~\ref{sec:idealbenchmark}).  

\paragraph{Variance scaling}
Figure~\ref{fig:exp1_var} shows the gain versus variance ratio $r = \sigma_1/\sigma_0$ (log scale).
\begin{figure}[t]
    \centering
    \includegraphics[width=0.48\textwidth]{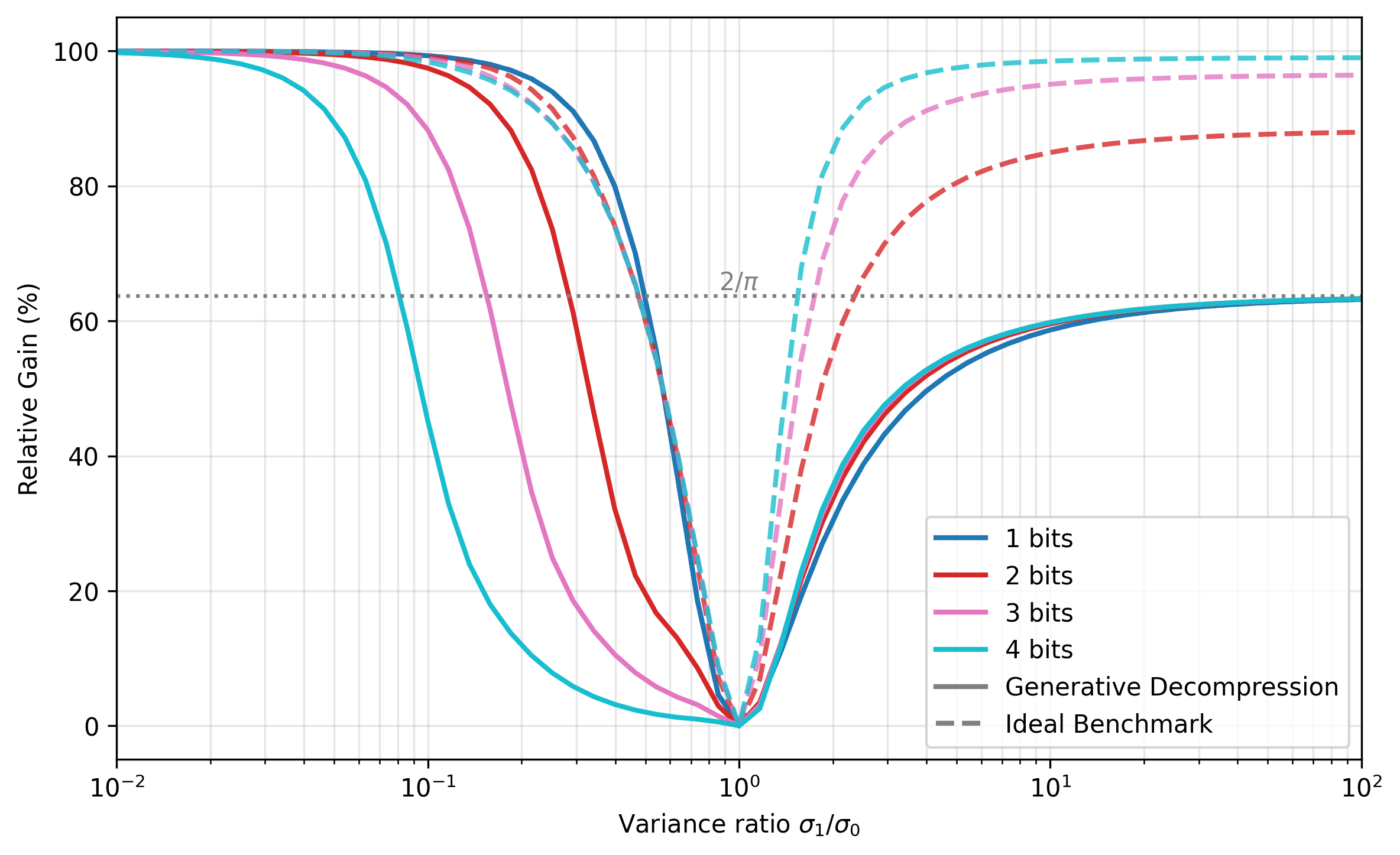}
    \caption{Decoder-side gain vs.\ variance ratio for Gaussian sources. Quantizers designed for $\mathcal{N}(0,1)$, evaluated on $\mathcal{N}(0, \sigma_1^2)$.}
    \label{fig:exp1_var}
\end{figure}
The characteristic U-shape appears: in the over-dispersed regime ($\sigma_1 \gg \sigma_0$), higher bit depths yield larger gains because outer-bin centroids can be shifted far outward, effectively un-clipping the tails—consistent with the tail-mismatch regime derived above. In the under-dispersed regime, the decoder shrinks reconstruction levels toward zero to reduce unnecessary granular noise. For 1 bit case, the ideal benchmark and the generative decoder performances are equivalent.

\begin{table}[h!]
\centering
\caption{Generative decoding for Laplace true distribution.}
\label{tab:laplace}
\begin{tabular}{ccccc}
\toprule
Bits & 1 & 2 & 3 & 4 \\
\midrule
Generative decoder relative gain & 1.65\% & 6.09\% & 11.01\% & 16.74\% \\
Ideal benchmark relative gain  & 1.65\% & 7.84\% & 24.78\% & 41.87\% \\
\bottomrule
\end{tabular}
\end{table}

\paragraph{Laplace distribution}
The Laplace distribution has tails that decay as $\propto e^{-|x|}$, while the Gaussian distribution has tails that decay more rapidly as $\propto e^{-{x^2}}$. On a zero-mean unit-variance Laplace source, gains from generative decompression increase monotonically with rate as shown in Table~\ref{tab:laplace}. The heavier Laplacian tails are captured in dedicated outer bins at higher rates, where the generative decoder can aggressively correct the severe under-estimation of tail mass by the Gaussian design—exactly the tail-mismatch scenario that produces persistent high-rate gains.



These experiments numerically confirm the theoretical predictions: decoder adaptation yields modest gains under pure central mismatch but dramatic, rate-persistent (or even increasing) gains whenever the true distribution places more mass in the tails than the design distribution anticipated.

\subsection{Benchmarking Against Ideal Matched Quantization}\label{sec:idealbenchmark}

To rigorously assess the efficacy of generative decompression, we must compare it not only against the  unadjusted decoder but against the \emph{ideal benchmark}: a system where both encoder and decoder are perfectly matched to the true distribution $f_t(x)$.

Let the point density function of the fixed encoder be $\lambda_d(x) \propto f_d(x)^{1/3}$. We define three distortion values:
\begin{enumerate}
 \item {Ideal Benchmark ($D_{\text{ideal}}$):} The encoder and decoder are both retrained for $f_t(x)$. By the Panter-Dite asymptotic relation~\cite{GrayNeuhoff1998,1701410}:
\begin{align}
 D_{\text{ideal}} \approx \frac{1}{12 N^2} \|f_t\|_{1/3}^3, 
\end{align}
where $\|f\|_\alpha \triangleq \left(\int f^\alpha dx\right)^{1/\alpha}$.
 \item {Unadjusted Decoder ($D_{\text{fix}}$):} The decoder uses centroids designed for $f_d$. This introduces a systematic bias in every bin, which does not vanish as $N \to \infty$ if the support of $f_t$ exceeds $f_d$.
\item {Generative Decoder ($D_{\text{gen}}$):} The encoder is fixed to $\lambda_d$, but reconstruction is optimal. The distortion is given by the mismatched Bennett integral~\cite{391237}:
 \begin{equation}
D_{\text{gen}} \approx \frac{1}{12 N^2} \int \frac{f_t(x)}{\lambda_d(x)^2} dx.
\end{equation}
\end{enumerate}

\begin{definition}[Mismatch penalty factor]
Let $\mathcal{L}$ be the asymptotic ratio of the generative distortion to the ideal distortion:
\begin{equation}\label{eq:asymptoticratio}
\mathcal{L} \triangleq \lim_{N \to \infty} \frac{D_{\text{gen}}}{D_{\text{ideal}}} = \frac{\int f_t(x) \lambda_d(x)^{-2} dx}{\|f_t\|_{1/3}^3}.
\end{equation}
\end{definition}
This factor allows us to  differentiate between recoverable and irrecoverable regimes for the generative decoder. The integral in \eqref{eq:asymptoticratio} is finite under mild regularity conditions—for example, when the design density does not vanish faster than the true density in the tails—so that the mismatch penalty factor is well-defined.

We first examine the fundamental limit of binary quantization ($N=1$). This regime admits a strong optimality result for scale-family distributions (e.g., Gaussian or Laplace with varying variance).

\begin{theorem}[Perfect optimality at $N=1$]
\label{thm:perfect_N1}
Let the design and true distributions be symmetric and unimodal about the same mean, differing only in variance (scale). For a 1-bit scalar quantizer, the generative decoder achieves the ideal distortion exactly:
\begin{equation}
    D_{\text{gen}} = D_{\text{ideal}} < D_{\text{fix}}.
\end{equation}
In this regime, the Mismatch Penalty Factor vanishes ($\mathcal{L} = 1$). This statement holds within the considered scalar, 1-bit setting.
\end{theorem}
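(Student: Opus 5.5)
The plan is to show that, in the 1-bit scale-mismatch setting, the fixed encoder's partition already coincides with the partition the ideal (true-distribution) quantizer would use. Once that holds, Proposition~\ref{prop:optimal-recon-true} makes the generative decoder identical to the ideal decoder. Concretely, write $f_d(x) = \frac{1}{\sigma_0} h\bigl(\frac{x-\mu}{\sigma_0}\bigr)$ and $f_t(x) = \frac{1}{\sigma_1} h\bigl(\frac{x-\mu}{\sigma_1}\bigr)$ for a common symmetric unimodal standard density $h$ with $\sigma_1 \neq \sigma_0$. A 1-bit quantizer is a single threshold $\tau$ with cells $\mathcal{R}_1 = (-\infty,\tau)$ and $\mathcal{R}_2 = [\tau,\infty)$, so the comparison reduces to where the two optimal thresholds sit.

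\textbf{Step 1 (optimal threshold is the mean).} I would show that for the density $h$ the optimal 1-bit threshold is $0$. By scale and location equivariance of MSE, the optimal threshold for $f_d$ and for $f_t$ is then $\tau = \mu$ in both cases. I would argue this by writing the distortion as a function of $\tau$ with centroid reconstructions, $J(\tau) = \mathbb{E}[X^2] - \sum_i p_i(\tau)\, c_i(\tau)^2$, and using the Lloyd condition: $\tau$ must be the midpoint of the two centroids. The symmetric threshold $\tau = 0$ satisfies this, because symmetry gives $c_{1,2} = \mp c$.

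\textbf{Step 2 (the main obstacle: uniqueness of that threshold).} Showing that $\tau = 0$ is the \emph{global} optimum, and not merely a stationary point, is the step I expect to be hardest. For general symmetric unimodal densities, asymmetric Lloyd fixed points can in principle exist. I would first try the standard Fleischer-type argument: under log-concavity of $h$ (which covers the Gaussian and Laplace cases the paper targets), the Lloyd map is a contraction and the fixed point is unique. Failing a proof for all unimodal $h$, I would state this as the regularity assumption under which the theorem holds. Alternatively, I would note that the paper's design quantizer is by construction the symmetric one, with threshold at the mean, as in the Gaussian 1-bit case treated earlier.

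\textbf{Step 3 (equality and strict inequality).} With the partitions equal, the ideal system uses the same cells and the true centroids $a_{1,2}^{t\star} = \mu \mp \sigma_1 c$, where $c = \mathbb{E}_h[Z \mid Z>0]$. By Proposition~\ref{prop:optimal-recon-true}, these are exactly the points the generative decoder uses, so $D_{\text{gen}} = D_{\text{ideal}}$ holds exactly, not only asymptotically. This gives ratio $1$, which is the sense in which $\mathcal{L} = 1$ here. For the strict inequality, the unadjusted decoder uses $a_{1,2}^d = \mu \mp \sigma_0 c$. The bias--variance decomposition used above for the overload bin applies to each cell and gives
\begin{equation*}
D_{\text{fix}} - D_{\text{gen}} = \sum_{i=1}^{2} p_i^{(t)} \bigl(a_i^{t\star} - a_i^d\bigr)^2 = c^2 (\sigma_1 - \sigma_0)^2 .
\end{equation*}
Here $p_1^{(t)} = p_2^{(t)} = 1/2$, and the bound is strictly positive since $c>0$ and $\sigma_1 \neq \sigma_0$. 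This reproduces the Gaussian computation of Section~\ref{sec:2overpi} as the special case $c = \sqrt{2/\pi}$.
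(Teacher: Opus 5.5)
Your argument is the paper's. The paper also reasons that the common mean is the optimal 1-bit threshold for both laws, so the design and ideal partitions coincide. Proposition~\ref{prop:optimal-recon-true} then gives $D_{\text{gen}}=D_{\text{ideal}}$, and the squared bias of the design centroids makes $D_{\text{fix}}$ strictly larger.

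The one real difference is your Step~2, and there you are more careful than the paper, which simply asserts that every symmetric unimodal density has the mean as its unique optimal 1-bit threshold. That assertion is false in general. With zero mean and centroid reconstruction, the distortion is $\mathbb{E}[X^2]-m(\tau)^2/\bigl(F(\tau)(1-F(\tau))\bigr)$, where $m(\tau)=\int_\tau^\infty x f(x)\,dx$. Take the symmetric unimodal mixture $0.9\,\mathrm{U}[-0.01,0.01]+0.1\,\mathrm{U}[-1,1]$. The subtracted term is about $0.003$ at $\tau=0$ but about $0.013$ at $\tau=0.01$, so the optimal threshold is off-center. After scaling by $\sigma_0$ and $\sigma_1$, the design and ideal thresholds then no longer coincide, and in general $D_{\text{gen}}>D_{\text{ideal}}$.

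Your log-concavity hypothesis (Fleischer/Kieffer uniqueness, which covers the Gaussian and Laplace cases) is therefore genuinely needed and should be stated as an assumption. Your fallback, that the design quantizer is symmetric by construction, does not rescue the claim: $D_{\text{ideal}}$ re-optimizes the threshold for $f_t$, so the issue remains.

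One small correction on your closing remark. Your gap $c^2(\sigma_1-\sigma_0)^2$ is correct, but it does not reproduce the displayed $D_t(\mathbf{a}^d)=\sigma_1^2+2/\pi$ in Section~\ref{sec:2overpi} (where $\sigma_0=1$). That display omits the cross term $-\tfrac{4}{\pi}\sigma_0\sigma_1$, which your formula correctly contains. The $2/\pi$ limiting gain is unaffected.
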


\begin{IEEEproof}
For any symmetric unimodal distribution centered at $\mu$, the optimal 1-bit quantization threshold is unique and equal to $\mu$. Since the design and true distributions share $\mu$, the fixed encoder's partition $\mathcal{R}_1 = (-\infty, \mu), \mathcal{R}_2 = [\mu, \infty)$ is identical to the partition of the ideal encoder optimized for the true source.
Since the partitions are identical, $D_{\text{gen}} = D_{\text{ideal}}$.
The unadjusted decoder, however, uses centroids $a^d \neq a^{t\star}$, yielding strictly higher distortion due to the squared bias term derived in Sec.~\ref{sec:scalar}-A.
\end{IEEEproof}

\begin{theorem}[Distortion hierarchy]
\label{thm:hierarchy}
For any blocklength $N$ and any mismatch scenario where the design partition differs from the optimal true partition, the distortions satisfy the strict hierarchy:
\begin{equation}
    D_{\text{ideal}} < D_{\text{gen}} < D_{\text{fix}}.
\end{equation}
Equality $D_{\text{gen}} = D_{\text{fix}}$ holds only if the design centroids coincidentally match the true conditional means. Equality $D_{\text{ideal}} = D_{\text{gen}}$ holds only if the design partition is optimal for the true source (e.g., the symmetric $N=1$ case in Theorem \ref{thm:perfect_N1}).
\end{theorem}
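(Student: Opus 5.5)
The plan is to split the hierarchy into its two inequalities and treat each as an optimization over a nested set of design variables. Throughout, $D(\mathcal{P},\mathbf{a})\triangleq\mathbb{E}_t[(X-a_{I})^2]$ denotes the true-law distortion of a quantizer with partition $\mathcal{P}=\{\mathcal{R}_i\}$ and reconstruction vector $\mathbf{a}$. Then
\[
D_{\text{fix}}=D(\mathcal{P}_d,\mathbf{a}^d), \quad
D_{\text{gen}}=\min_{\mathbf{a}}D(\mathcal{P}_d,\mathbf{a}), \quad
D_{\text{ideal}}=\min_{\mathcal{P},\mathbf{a}}D(\mathcal{P},\mathbf{a}).
\]
Because $D_{\text{gen}}$ minimizes over $\mathbf{a}$ only, with the partition fixed to $\mathcal{P}_d$, while $D_{\text{ideal}}$ minimizes jointly over partition and reconstruction, the weak chain $D_{\text{ideal}}\le D_{\text{gen}}\le D_{\text{fix}}$ is immediate. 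The work is in the strictness and equality characterizations.

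For the right inequality I will use Proposition~\ref{prop:optimal-recon-true} together with the bias--variance identity already used for the overload bin, now applied to every bin:
\[
D_{\text{fix}}-D_{\text{gen}}=\sum_{i=1}^N p_i^{(t)}\big(a_i^d-a_i^{t\star}\big)^2 .
\]
This is a sum of nonnegative terms. It vanishes if and only if $a_i^d=a_i^{t\star}$ for every bin with $p_i^{(t)}>0$, which is exactly the stated equality case. So strictness holds whenever some design centroid differs from the corresponding true conditional mean on a bin of positive true mass.

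The left inequality is the main obstacle. The hypothesis is that $\mathcal{P}_d$ differs from the optimal true partition, and I must show $D_{\text{gen}}>D_{\text{ideal}}$ strictly. The approach is to argue by contradiction. Suppose $D_{\text{gen}}=D_{\text{ideal}}$. Then $(\mathcal{P}_d,\mathbf{a}^{t\star})$ is a global minimizer of $D$. I will use that any global minimizer must satisfy the Lloyd--Max nearest-neighbour condition: up to $f_t$-null sets, its thresholds are midpoints of consecutive reconstruction points. If some threshold $\tau_k$ were not the midpoint $(a_k^{t\star}+a_{k+1}^{t\star})/2$, moving it toward the midpoint would reassign a set of positive true mass to a strictly closer reconstruction point. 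Here I need $f_t>0$ near the threshold, which I will assume as a regularity condition. That reassignment strictly lowers $D$, contradicting optimality. Hence $\mathcal{P}_d$ would be the nearest-neighbour partition of an optimal codebook, i.e.\ an optimal true partition, contradicting the hypothesis.

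The delicate point is uniqueness. ``The optimal true partition'' must be read as ``not among the optimal true partitions'', since multiple Lloyd fixed points can be globally optimal. I will also note that for log-concave $f_t$ (Gaussian, Laplace) Fleischer's condition gives a unique optimum, so this reading is harmless in the cases studied. Finally, the equality case $D_{\text{ideal}}=D_{\text{gen}}$ follows when $\mathcal{P}_d$ is an optimal true partition, which is exactly the situation of Theorem~\ref{thm:perfect_N1}.
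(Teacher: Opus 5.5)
Your proposal is correct and follows the same route as the paper: both treat $D_{\text{gen}}$ as a partition-constrained minimum sandwiched between the global minimum $D_{\text{ideal}}$ and the particular choice $\mathbf{a}^d$. Both use the conditional-mean optimality of Proposition~\ref{prop:optimal-recon-true} for the right inequality and the constrained-versus-global comparison for the left. Your per-bin identity $D_{\text{fix}}-D_{\text{gen}}=\sum_i p_i^{(t)}(a_i^d-a_i^{t\star})^2$ and your nearest-neighbour contradiction make explicit the strictness and equality conditions that the paper asserts only via strict convexity and the phrase ``not the optimal Voronoi partition''.
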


\begin{IEEEproof}
The inequality $D_{\text{gen}} \le D_{\text{fix}}$ follows from the definition of the generative decoder in Prop. \ref{prop:optimal-recon-true}. For a fixed partition $\mathcal{P}$, the function $g(\mathbf{a}) = \mathbb{E}_t[(X - a_I)^2]$ is strictly convex and minimized uniquely by the conditional means $\mathbf{a}^{t\star}$. Since the unadjusted decoder uses centroids $\mathbf{a}^d \neq \mathbf{a}^{t\star}$ (under non-trivial mismatch), it holds strictly that $D_{\text{gen}} < D_{\text{fix}}$.

The inequality $D_{\text{ideal}} \le D_{\text{gen}}$ follows from the constraints on the optimization space. Let $\mathbb{Q}_N$ be the set of all possible $N$-level quantizers (pairs of partitions and codebooks). The ideal distortion is the global minimum: $D_{\text{ideal}} = \min_{(P, C) \in \mathbb{Q}_N} \mathbb{E}_t[(X - Q(X))^2]$.
The generative distortion is a constrained minimum where the partition $P$ is fixed to the design partition $P_d$: $D_{\text{gen}} = \min_{C} \mathbb{E}_t[(X - Q_{P_d, C}(X))^2]$.
Since the constrained minimum cannot be strictly lower than the global minimum, $D_{\text{ideal}} \le D_{\text{gen}}$. This is strict whenever $P_d$ is not the optimal Voronoi partition for $f_t$, which is true for $N>1$ under variance mismatch as the optimal cell widths depend on the source density.
\end{IEEEproof}

For higher rates ($N \to \infty$), the encoder's partition density becomes the dominant factor. Here, the fixed encoder is no longer optimal, but the generative decoder provides a crucial correction to the convergence rate.

\begin{theorem}[Asymptotic rate recovery]
\label{thm:rate_recovery}
Assume that the true distribution $P_X^{(t)}$ is absolutely continuous and has heavier tails than the design distribution $P_X^{(d)}$ (e.g., $f_t(x)/f_d(x)$ is eventually non-decreasing in $|x|$ and bounded away from zero on the overload region). In the presence of such distribution mismatch, the unadjusted decoder distortion saturates at a constant floor, while the generative decoder recovers the ideal asymptotic decay rate of $O(N^{-2})$.
\begin{align}
\lim_{N \to \infty} D_{\text{fix}} &> 0, \\
\lim_{N \to \infty} N^2 D_{\text{gen}} &= \text{constant}.
\end{align}
\end{theorem}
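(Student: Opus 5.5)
The plan splits the distortion, as in the high-rate decomposition of Sec.~\ref{sec:scalar}-B, into a granular part over $S_d=[\tau_0,\tau_{N-1}]$ and an overload part over the two outer cells $\mathcal{R}_1,\mathcal{R}_N$. I treat the generative decoder and the unadjusted decoder separately, because they use the same partition and differ only in the reconstruction points. Throughout I use Proposition~\ref{prop:optimal-recon-true} in the bias--variance form: for every cell, the cost of a reconstruction point $a_i$ is $p_i^{(t)}\bigl(\mathrm{Var}_t[X\mid X\in\mathcal{R}_i]+(\mathbb{E}_t[X\mid X\in\mathcal{R}_i]-a_i)^2\bigr)$.

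\textbf{Generative decoder.} I would show that $N^2D_{\text{gen}}$ converges to a finite constant.
\begin{enumerate}
\item For each inner cell of width $\Delta_i$, the conditional variance under $f_t$ is at most $\Delta_i^2/4$. With smooth $f_t$ and $\Delta_i\approx 1/(N\lambda_d(x))$, the Riemann-sum argument behind the mismatched Bennett integral gives
\[
N^2\sum_{\text{inner}} p_i^{(t)}\,\mathrm{Var}_t[X\mid X\in\mathcal{R}_i]\;\to\;\tfrac1{12}\int f_t\,\lambda_d^{-2}\,dx .
\]
This limit is finite under the regularity assumed after \eqref{eq:asymptoticratio}.
\item For the outer cells the bias is zero, so the cost is $p_N^{(t)}\mathrm{Var}_t[X\mid X>\tau_{N-1}]$. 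I would bound it by $\int_{\tau_{N-1}}^\infty (x-\tau_{N-1})^2 f_t(x)\,dx$.
\item Using that the Lloyd--Max thresholds satisfy $\tau_{N-1}\to\infty$ at the rate fixed by $\lambda_d\propto f_d^{1/3}$ (for a Gaussian design, $\tau_{N-1}\sim\sqrt{6\ln N}$), I would show this tail integral is $o(N^{-2})$. For a Laplace-type $f_t$ it is of order $e^{-\tau_{N-1}}=N^{-\sqrt6}\cdots$.
\end{enumerate}
Summing the two parts gives $\lim N^2D_{\text{gen}}=\tfrac1{12}\int f_t\lambda_d^{-2}\,dx$, which is the constant claimed.

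\textbf{Unadjusted decoder.} Here I would lower-bound $D_{\text{fix}}$ by the squared-bias term of the outermost cell, $p_N^{(t)}(\mu^{(t)}_{\text{tail}}-a_N^d)^2$. The hypothesis that $f_t/f_d$ is eventually non-decreasing and bounded away from zero on the overload region means the true tail conditional law stochastically dominates the design tail conditional law. Hence $\mu^{(t)}_{\text{tail}}-a_N^d\ge 0$, and it is at least the gap between the two tail means. To obtain a constant floor I would take the regime the paper describes after the definition of $D_{\text{fix}}$: the design places its outermost reconstruction at a level that does not follow the true mass, so that $a_N^d$ and $\tau_{N-1}$ stay within the design support. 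Meanwhile $f_t$ keeps a fixed positive mass $p_N^{(t)}\ge c>0$ beyond that level, with a conditional mean strictly larger. The product then stays bounded below, and $\lim D_{\text{fix}}>0$.

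\textbf{Main obstacle.} The hardest step is making the two halves hold under one consistent hypothesis. The generative bound needs $\tau_{N-1}\to\infty$ so that the outer cell's conditional variance times its mass is $o(N^{-2})$. The floor for $D_{\text{fix}}$ needs the outer cell to keep non-vanishing true mass and bias. If the thresholds grow, the fixed-decoder tail bias term $p_N^{(t)}(\mu_{\text{tail}}^{(t)}-a_N^d)^2$ itself tends to zero: for a Gaussian design against a Laplace source, both means grow like $\tau_{N-1}$. In that case only decay slower than $N^{-2}$ can be extracted, not a positive floor.

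I would first try to resolve this by formalizing ``support of $f_t$ exceeds $f_d$'' as a design whose effective support region stays bounded. I would then check whether the generative variance term still vanishes there. If it does not, the statement requires a narrower reading, and I would make that assumption explicit in the proof.
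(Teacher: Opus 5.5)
Your decomposition is the same one the paper uses: a Bennett granular term plus the bias--variance split of the overload cells. The obstacle you isolate at the end is not a technicality you can resolve. Under any single encoder family the two displayed limits are incompatible, so the statement cannot be proved as written.

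To see this, first suppose $\tau_{N-1}\le T$ along a subsequence. Because $\{X>T\}\subseteq\{X>\tau_{N-1}\}$,
\[
D_{\text{gen}}\;\ge\;\min_{a}\mathbb{E}_t\bigl[(X-a)^2\,\mathbf{1}\{X>\tau_{N-1}\}\bigr]\;\ge\;\mathbb{P}_t(X>T)\,\mathrm{Var}_t[X\mid X>T]\;>\;0,
\]
so $N^2D_{\text{gen}}\to\infty$ along that subsequence. Now suppose instead $\tau_{N-1}\to\infty$, with the inner cells shrinking and the design tail excess $a_N^d-\tau_{N-1}$ bounded, as for the Lloyd--Max designs your step 3 relies on. Every inner cell then costs at most $p_i^{(t)}\Delta_i^2$, because $a_i^d\in\mathcal{R}_i$. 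The outer cell costs at most $2\mathbb{E}_t[(X-\tau_{N-1})^2\mathbf{1}\{X>\tau_{N-1}\}]+2p_N^{(t)}(a_N^d-\tau_{N-1})^2$. This tends to $0$ for any finite-variance $f_t$, since $p_N^{(t)}\to 0$, and the same holds for $\mathcal{R}_1$. Hence $D_{\text{fix}}\to 0$ and there is no floor, so the answer to your closing ``check'' is negative.

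The paper's own proof passes over exactly this point. It treats the \emph{conditional} bias $\mathbb{E}_t[(X-a_I^d)^2\mid\mathcal{O}_d]$, not weighted by $\mathbb{P}_t(\mathcal{O}_d)$, as the floor. It also assigns the scaling $\Delta_i^2\propto N^{-2}$ to the semi-infinite overload cells.

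Separately, step 3 of your generative half contains an error. With $\tau_{N-1}\sim\sqrt{6\ln N}$, the quantity $e^{-c\tau_{N-1}}=e^{-c\sqrt{6\ln N}}$ decays more slowly than every power of $N$; it is not $N^{-\sqrt{6}}$. For the unit-variance Laplace source, $X-\tau$ given $X>\tau\ge 0$ is exponential with rate $\sqrt{2}$. Hence $p_N^{(t)}\mathrm{Var}_t[X\mid X>\tau_{N-1}]=\tfrac{1}{4}e^{-\sqrt{2}\tau_{N-1}}$ exactly, and $N^2$ times this diverges.

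Consistently, your step-1 constant is infinite in that case, since $f_t\lambda_d^{-2}\propto e^{-\sqrt{2}|x|+x^2/(3\sigma_0^2)}$. For Gaussian variance mismatch it is finite only when $r<\sqrt{3/2}$. You cannot import the regularity remark after \eqref{eq:asymptoticratio}: ``the design density does not vanish faster than the true density'' is the opposite of the theorem's heavier-tail hypothesis.

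Your argument therefore yields $N^2D_{\text{gen}}\to\tfrac{1}{12}\int f_t\lambda_d^{-2}\,dx$ only for mildly heavier tails such as $1<r<\sqrt{3/2}$. That is exactly where the first paragraph shows $D_{\text{fix}}\to 0$ as well. To obtain a correct result you must weaken the statement explicitly, in one of two ways.
\begin{itemize}
\item Keep that generative limit, under integrability and an $o(N^{-2})$ tail term, and drop the floor claim.
\item Fix a bounded-support encoder. Then $D_{\text{fix}}$ does have a positive floor, but the generative decoder only removes the overload bias. It leaves the irreducible term $p_N^{(t)}\mathrm{Var}_t[X\mid X>\tau_{N-1}]$ plus a granular term of order $N^{-2}$.
\end{itemize}
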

\begin{IEEEproof}
The unadjusted decoder error decomposes into variance (granular) + bias squared. As $N \to \infty$, granular error vanishes, but the bias $\mathbb{E}_t[(X - a_I^d)^2 \mid \mathcal{O}_d]$ remains strictly positive due to the mismatch in the overload regions (see Sec.~\ref{sec:scalar}-B). The generative decoder removes this bias by setting centroids to conditional means $a_I^{t\star}$, leaving only the conditional variance, which scales with the cell width squared $\Delta_i^2 \propto N^{-2}$.
\end{IEEEproof}

While Theorem \ref{thm:rate_recovery} proves that generative decompression restores the correct \emph{rate} of convergence, there is a constant penalty factor compared to the ideal benchmark.

\begin{lemma}[Asymmetry of Gaussian mismatch]
\label{lemma:asymmetry}
Let true and design distributions be zero-mean Gaussians with variances $\sigma_1^2$ and $\sigma_0^2$, respectively. Let $r = \sigma_1 / \sigma_0$ be the mismatch ratio. The penalty factor $\mathcal{L}$ has an asymmetric dependence on $r$:
\begin{itemize}
\item {Resolution loss, $r \to 0$:} If the source is narrower than designed ($\sigma_1 \ll \sigma_0$), the penalty diverges: $\mathcal{L} \to \infty$.
\item {Tail mismatch, $r > 1$:} If the source is wider than designed ($\sigma_1 > \sigma_0$), the penalty $\mathcal{L}$ grows slowly, provided the asymptotic integral converges.
\end{itemize}
\end{lemma}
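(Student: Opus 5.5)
The plan is to evaluate the mismatch penalty factor $\mathcal{L}$ in \eqref{eq:asymptoticratio} in closed form for the Gaussian pair, and then read off both regimes from that formula. Throughout I take the fixed encoder's point density to be the normalized Panter--Dite density $\lambda_d(x)=f_d(x)^{1/3}/\int f_d^{1/3}\,dx$.

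\emph{Step 1 (denominator).} For $f(x)=(2\pi\sigma^2)^{-1/2}e^{-x^2/(2\sigma^2)}$ we have $f^{1/3}=(2\pi\sigma^2)^{-1/6}e^{-x^2/(6\sigma^2)}$. Its integral is a Gaussian integral, $(2\pi\sigma^2)^{-1/6}\sqrt{6\pi}\,\sigma$. Cubing gives
\begin{equation*}
\|f_t\|_{1/3}^3=6\sqrt{3}\,\pi\,\sigma_1^2 ,
\end{equation*}
which is the classical Panter--Dite constant.

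\emph{Step 2 (numerator).} Write
\begin{equation*}
\int f_t\lambda_d^{-2}\,dx=\Big(\int f_d^{1/3}\,dx\Big)^2\int f_t f_d^{-2/3}\,dx ,
\end{equation*}
with $f_d^{-2/3}=(2\pi\sigma_0^2)^{1/3}e^{x^2/(3\sigma_0^2)}$. The remaining integral is a Gaussian moment generating function of $Z^2$. For $Z\sim\mathcal{N}(0,1)$ it equals
\begin{equation*}
\mathbb{E}\big[e^{r^2 Z^2/3}\big]=(1-2r^2/3)^{-1/2},
\end{equation*}
which is finite iff $r^2<3/2$. This is precisely the ``provided the asymptotic integral converges'' condition in the statement. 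Collecting the constants, the prefactor should come out as $6\pi\sigma_0^2$, so
\begin{equation*}
\mathcal{L}(r)=\frac{1}{\sqrt{3}\,r^2}\Big(1-\tfrac{2r^2}{3}\Big)^{-1/2}=\frac{1}{r^2\sqrt{3-2r^2}},\qquad 0<r<\sqrt{3/2}.
\end{equation*}
As a sanity check, $\mathcal{L}(1)=1$, consistent with the matched case.

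\emph{Step 3 (two regimes).} As $r\to 0$ we get $\mathcal{L}\sim 1/(\sqrt{3}\,r^2)\to\infty$. The interpretation is that cells sized for $\sigma_0$ are far too coarse for the concentrated true source, giving the resolution loss. For $r>1$, set $g(r)=r^4(3-2r^2)$, so that $\mathcal{L}=g^{-1/2}$. Then $g'(r)=12r^3(1-r^2)$, so $r=1$ is a critical point of $\mathcal{L}$ and $\mathcal{L}(r)=1+O((r-1)^2)$. On $(1,\sqrt{3/2})$, $\mathcal{L}$ increases monotonically and stays bounded until it approaches the convergence boundary. The asymmetry can be made explicit by comparing the two sides: for fixed $r<1$ the cost $\mathcal{L}(r)$ scales like $r^{-2}$, whereas the same multiplicative deviation above $1$ incurs only a quadratically small penalty near $r=1$. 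This is the precise sense of ``grows slowly.''

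\emph{Main obstacle.} The algebra is routine. The delicate point is interpretation at the boundary $r^2\ge 3/2$. There the Bennett integral diverges, so $\mathcal{L}$ as defined is infinite even though Theorem~\ref{thm:rate_recovery} asserts an $O(N^{-2})$ rate. I would handle this by stating the lemma's second bullet strictly on $1<r<\sqrt{3/2}$, matching its proviso. I would also remark that outside this range the high-rate approximation for $D_{\text{gen}}$ breaks down, because the overload cells dominate. The $r>1$ claim is therefore a statement only within the convergent regime.
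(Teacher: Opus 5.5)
Your Steps 1 and 2 are correct. With $\lambda_d\propto f_d^{1/3}$ one gets exactly $\mathcal{L}(r)=1/(r^2\sqrt{3-2r^2})$ for $0<r<\sqrt{3/2}$, and $\mathcal{L}=+\infty$ for $r\ge\sqrt{3/2}$. This gives the first bullet rigorously, since $\mathcal{L}\sim 1/(\sqrt3\,r^2)$. That is more than the paper's proof delivers: it only substitutes the Gaussian point density and argues qualitatively.

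The gap is in Step~3, for the second bullet: your own formula refutes it. On $(1,\sqrt{3/2})$ the penalty is not bounded. It increases from $1$ to $+\infty$, diverging after only a $22\%$ widening of the source. Your asymmetry argument compares the $r\to0$ asymptotic $r^{-2}$ with the local behaviour at $r=1$. But $r=1$ is a smooth interior minimum, so the quadratic behaviour holds on both sides: $\log\mathcal{L}(e^{s})=6s^{2}+20s^{3}+O(s^{4})$, and the positive cubic term already makes the $r>1$ side worse. Globally, for $\rho>1$ with $u=\rho^2<3/2$,
\begin{equation*}
\mathcal{L}(\rho)>\mathcal{L}(1/\rho)\iff 2u^{6}-3u^{5}+3u-2=(u-1)^{3}(2u^{3}+3u^{2}+3u+2)>0,
\end{equation*}
which always holds. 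For $\rho\ge\sqrt{3/2}$ one has $\mathcal{L}(\rho)=\infty>\mathcal{L}(1/\rho)$. So for equal multiplicative mismatch the tail side is strictly more penalized, which is the opposite of the claimed asymmetry.

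No argument can therefore establish ``grows slowly'' in any comparative sense. The only true statement in that direction is the local one, $\mathcal{L}=1+O((r-1)^2)$, and it is symmetric under $r\leftrightarrow 1/r$ to second order. The paper's argument for this bullet (``growing much slower than $1/r^2$'') is an unsupported assertion and does not supply the missing step either. What your computation actually establishes is a corrected lemma: the explicit formula, divergence as $r\to0$, and divergence at the finite threshold $r=\sqrt{3/2}$.

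Your boundary remark is also right and worth keeping. For $r^2\ge 3/2$ the Lloyd--Max support grows only like $\sigma_0\sqrt{6\ln N}$, the outer cells dominate, and $N^2D_{\text{gen}}\to\infty$. For $r^2>3/2$, $D_{\text{gen}}$ decays roughly like $N^{-3/r^2}$ up to logarithmic factors. This is inconsistent with Theorem~\ref{thm:rate_recovery} for a Gaussian design.
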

\begin{IEEEproof}
Substituting the Gaussian point density $\lambda_d(x) \propto \exp(-x^2/3\sigma_0^2)$ into the definition of $\mathcal{L}$ in \eqref{eq:asymptoticratio}, the integrand's central term determines the granular error coefficient.
When $r \to 0$, the ratio $\lambda_d(x)^{-2}$ introduces excessive weight because the design density is spread far too thinly. This coarse granularity dominates the distortion, and no generative correction can recover the lost resolution, hence $\mathcal{L} \to \infty$.
When $r > 1$, the fixed encoder has finer resolution than needed for the source core. While this is inefficient, the penalty $\mathcal{L}$ is controlled by the $\sigma_1/\sigma_0$ ratio, growing much slower than $1/r^2$. Crucially, the non-vanishing bias term of the fixed decoder is eliminated by the generative decoder, resulting in a system that is only penalized by the sub-optimal bin widths, allowing for a large recovery of the Ideal gain.
\end{IEEEproof}

Lemma \ref{lemma:asymmetry} characterizes the \emph{efficiency} of recovery. In the resolution loss regime ($r < 1$), the performance gap is dominated by the fixed encoder's coarse granularity—an irreversible information loss that limits how close any decoder can get to the ideal benchmark. In contrast, in the tail mismatch regime ($r > 1$), the dominant error source is determining the scale of the outer bins. Since this is more of a calibration issue, the generative decoder successfully closes the gap, recovering a significant portion of the ideal gain.

\section{Robust Generative Decompression over Noisy Channels}
\label{sec:noisy}

In practical engineering systems, e.g., CSI feedback in wireless systems, the quantization index is transmitted over a noisy medium. We extend the framework to scenarios where the decoder observes a noisy version of the encoder output.

\subsection{Derivation of Soft-Decoding Rule}
Let $I = Q_d(X)$ be the transmitted index, and $\hat{I}$ be the received index after passing through a discrete memoryless channel characterized by transition probabilities $P(\hat{I}|I)$. The decoder seeks an estimator $g(\hat{I})$ minimizing the MSE under the true source distribution:
\begin{equation}
    \min_{g} \mathbb{E}_{t} \left[ (X - g(\hat{I}))^2 \right].
\end{equation}
The solution is the posterior mean given the \emph{received} symbol. By the law of total expectation, this expands to a soft-weighted combination of the generative centroids derived in Prop. \ref{prop:optimal-recon-true}:
\begin{align}
    a_{\hat{i}}^{t, \text{noisy}} &= \mathbb{E}_{t}[X \mid \hat{I} = \hat{i}] \nonumber \\
    &= \sum_{i=1}^N P(I=i \mid \hat{I}=\hat{i}) \cdot \underbrace{\mathbb{E}_{t}[X \mid I=i]}_{a_i^{t\star}}.
    \label{eq:soft_decoding}
\end{align}
Using Bayes' rule, the weighting term depends on both the channel statistics and the true source priors $p_i^{(t)} = \mathbb{P}_{t}(X \in \mathcal{R}_i)$:
\begin{equation}
    P(I=i \mid \hat{I}=\hat{i}) = \frac{P(\hat{I}=\hat{i} \mid I=i) p_i^{(t)}}{\sum_{k=1}^N P(\hat{I}=\hat{i} \mid I=k) p_k^{(t)}}.
\end{equation}

The optimal reconstruction rule $a_{\hat{i}}^{t, \text{noisy}}$ for the noisy channel is an MMSE soft estimator that operates on the posterior distribution $P(X \mid \hat{I}=\hat{i})$. This process breaks down into two distinct, cascading corrections: one for the source mismatch (generative shift of the centroids) and one for the channel error.

For the optimal generative shift of the centroid, the base reconstruction values must first be set to the true conditional means, $a_i^{t\star}$, derived using the true source PDF, $f_t(x)$, and the fixed bin boundaries, $\mathcal{R}_i$. This step ensures the decoder's dictionary is statistically accurate before channel uncertainty is considered.
\begin{equation}\label{eq:gencenteroids}
    a_i^{t\star} = \frac{\int_{\mathcal{R}_i} x f_t(x) dx}{\int_{\mathcal{R}_i} f_t(x) dx} = \mathbb{E}_{t}[X \mid X \in \mathcal{R}_i]
\end{equation}

Then, to deal with the channel error we use the received noisy index $\hat{i}$ to calculate the posterior probability $P(I=i \mid \hat{I}=\hat{i})$. This posterior is then used to linearly combine the generative centroids calculated in \eqref{eq:gencenteroids}, achieving the overall MMSE estimate
\begin{align}
    a_{\hat{i}}^{t, \text{noisy}} &= \sum_{i=1}^N P(I=i \mid \hat{I}=\hat{i}) \cdot a_i^{t\star},
\end{align}
which is equivalent to \eqref{eq:soft_decoding}.

This decoupled two stage correction shows that the benefits of generative decompression persist even when the communication link is imperfect. The optimal decoder gains robustness by integrating the corrected source priors ($a_i^{t\star}$) into the channel's soft decoding process. This is specifically beneficial for the semantic communication use cases~\cite{khosravirad2025ratelessjointsourcechannelcoding} where a context-aware decoder of a \gls{jscc} can improve on a distortion.

\subsection{Example Case: Binary Symmetric Channel (BSC)}
\label{sec:bsc-analysis}

Consider the illustrative  case of a 1-bit quantizer designed for $\mathcal{N}(0, \sigma_0^2)$ applied to a source $X \sim \mathcal{N}(0, \sigma_1^2)$, transmitted over a binary symmetric channel (BSC) with crossover probability $\epsilon$.

The generative centroids under the true distribution are $a_{0,1}^{t\star} = \mp \sigma_1 \sqrt{2/\pi}$. Due to symmetry, $p_0^{(t)} = p_1^{(t)} = 0.5$.
If the decoder receives $\hat{I}=1$, the posterior probabilities are $P(I=1|\hat{I}=1) = 1-\epsilon$ and $P(I=0|\hat{I}=1) = \epsilon$.

The decoder may take one of the following strategies towards the final estimate $\hat{x}$.

\paragraph{Standard separation (Unadjusted)}
Hard decision on $\hat{I}$, reconstruction with design centroids (based on $\sigma_0$):
\begin{equation}
\hat{x}_{\text{std}} = \operatorname{sgn}(\hat{I})\cdot\sigma_0\sqrt{\frac{2}{\pi}},
\end{equation}
which suffers from both bias $(\sigma_1-\sigma_0)$ and hard-decision loss.

\paragraph{Adjusted separation (Hard generative)}
Hard decision on $\hat{I}$, but using true generative centroids:
\begin{equation}
\hat{x}_{\text{hard}} = \operatorname{sgn}(\hat{I})\cdot\sigma_1\sqrt{\frac{2}{\pi}},
\end{equation}
which eliminates bias, but is still overconfident on noisy indices.

\paragraph{Optimal joint detection and decoding (Soft generative)}
The MMSE estimator softly combines the true centroids using channel posteriors:
\begin{align}
\hat{x}_{\text{opt}} &= (1-2\epsilon)\cdot\operatorname{sgn}(\hat{I})\cdot\sigma_1\sqrt{\frac{2}{\pi}}.
\end{align}
The factor $(1-2\epsilon)$ is a reliability-based shrinkage toward zero.

\textbf{Quantifying the separation penalty:}
Since $\hat{x}_{\text{opt}} = \mathbb{E}[X|\hat{I}]$ is the conditional mean, the mean squared error of \emph{any} other estimator $\hat{x}$ decomposes as:
\begin{equation}
    \mathbb{E}[(X - \hat{x})^2] = \underbrace{\mathbb{E}[(X - \hat{x}_{\text{opt}})^2]}_{D_{\text{opt}}} + \underbrace{\mathbb{E}[(\hat{x} - \hat{x}_{\text{opt}})^2]}_{\text{Excess Penalty}}.
\end{equation}
This allows us to quantify the distortion penalty of separation strategies by simply measuring their deviation from the optimal soft reconstruction.

The adjusted separation strategy correctly identifies the true centroid magnitude ($\sigma_1$) but fails to account for channel reliability, effectively forcing $\epsilon=0$. The deviation is:
\begin{align}
    \Delta_{\text{hard}} &= |\hat{x}_{\text{hard}} - \hat{x}_{\text{opt}}| \nonumber \\
    &= \left| 1 \cdot \sigma_1\sqrt{\frac{2}{\pi}} - (1-2\epsilon)\sigma_1\sqrt{\frac{2}{\pi}} \right| = 2\epsilon\sigma_1\sqrt{\frac{2}{\pi}}.
\end{align}
Squaring this yields the precise penalty imposed by the hard-decision interface:
\begin{equation}
    D_{\text{hard}} - D_{\text{opt}} = 4\epsilon^2 \sigma_1^2 \frac{2}{\pi}.
    \label{eq:separation-penalty}
\end{equation}

The standard separation strategy incurs an additional squared bias because it uses the design variance $\sigma_0$ rather than the true variance $\sigma_1$.
Using the estimator definitions $\hat{x}_{\text{std}} = \sigma_0 K$ and $\hat{x}_{\text{hard}} = \sigma_1 K$ (where $K=\text{sgn}(\hat{I})\sqrt{2/\pi}$), we can explicitly compare their excess distortions relative to the optimal $\hat{x}_{\text{opt}} = (1-2\epsilon)\sigma_1 K$:
\begin{align}
    D_{\text{std}} - D_{\text{opt}} &\propto (\sigma_0 - (1-2\epsilon)\sigma_1)^2, \\
    D_{\text{hard}} - D_{\text{opt}} &\propto (2\epsilon\sigma_1)^2.
\end{align}

In the limit of low noise ($\epsilon \to 0$), the standard penalty reduces to the source mismatch bias $(\sigma_0 - \sigma_1)^2$, while the hard-decision penalty vanishes. Consequently, for any non-trivial mismatch where $\sigma_0$ does not coincidentally act as a shrinkage factor, the distortion hierarchy is strictly:
\begin{equation}
    D_{\text{std}} > D_{\text{hard}} > D_{\text{opt}}.
\end{equation}
The gap $D_{\text{std}} - D_{\text{hard}}$ represents the \emph{source bias} penalty (solved by generative decompression), while $D_{\text{hard}} - D_{\text{opt}}$ represents the \emph{separation} penalty (solved by soft aggregation).

This closed-form analysis has direct implications for the source-channel separation theorem, which asserts that optimizing source and channel coding independently is asymptotically  optimal  when models are perfectly known \cite{csiszar1981information}. In our setting, however, the encoder is fixed and the source model is mismatched, so the \emph{operational performance} of a strictly modular, hard-decision separation architecture incurs an additional penalty under semantic mismatch~\cite{10747747} or coding mismatch~\cite{zhou2019dispersion}. Our result provides a precise quantification of this architectural penalty, yielding two key insights:

\begin{enumerate}
    \item {Channel reliability directly scales the magnitude:}
    The factor $(1-2\epsilon)$ in the optimal estimator represents the effective reliability of the received bit. When $\epsilon=0$, we recover the full generative centroids $\pm\sigma_1\sqrt{2/\pi}$. When $\epsilon=0.5$ (pure noise), the reconstruction collapses to zero. The optimal decoder effectively shrinks the estimate to minimize variance when the channel is unreliable.

    \item {Suboptimality of the separation architecture:}
    A strictly separate design first performs hard channel decoding and then applies a fixed reconstruction table. Such a decoder cannot produce the factor $\sigma_1(1-2\epsilon)$ because the interface was fixed assuming the wrong variance ($\sigma_0$) and the architecture enforces a hard decision. As shown in \eqref{eq:separation-penalty}, the separated decoder injects unnecessary power proportional to $\epsilon^2$, proving that the optimal reconstruction must inseparably couple the true source variance $\sigma_1$ with the channel reliability.
\end{enumerate}

In summary, even in this scalar setting, distribution mismatch exposes the inefficiency of a strictly modular separation architecture with a hard-decision interface. While distribution-aware centroids (hard generative strategy) eliminate bias, optimal performance requires \emph{joint source-channel awareness}—a feature that cannot be implemented in a strictly modular design with a frozen encoder, even though Shannon's separation theorem continues to hold when its modeling and joint-design assumptions are satisfied.

\paragraph{Benchmarking Against Ideal JSCC}
The ultimate benchmark is a system where the encoder is also re-optimized for both the true source variance $\sigma_1$ and the channel noise $\epsilon$.
For symmetric unimodal sources over a BSC, the optimal quantization threshold is the mean ($\mu=0$) regardless of the noise level or source variance \cite{kurtenbach1969quantizing}.
Since the fixed encoder already uses this threshold, our proposed \emph{soft generative} decoder achieves the global lower bound for \gls{jscc}:
\begin{equation}
    D_{\text{opt}} = D_{\text{ideal}}^{\text{JSCC}}.
\end{equation}
This implies that for binary signaling, fixing the encoder imposes zero penalty---all gains are recoverable at the decoder. However, for $N > 1$, an ideal JSCC encoder would optimize bit-mappings (index assignment) to minimize Hamming distance errors—a degree of freedom unavailable to the fixed encoder—meaning $D_{\text{opt}}$ would eventually diverge from $D_{\text{ideal}}^{\text{JSCC}}$ at higher rates.

This optimality result relies on the fact that a 1-bit system has trivial topology: a bit flip always transitions to the unique alternative centroid. For multi-bit systems ($N > 1$), the relationship between the generative decoder and the ideal benchmark is governed by the efficiency of the encoder. The positive gaps $D_{\text{gen}} > D_{\text{ideal}}$ re-emerge for $N > 1$, with the magnitude of the gap determined by the inefficiency of the fixed encoder's index assignment relative to the channel noise profile. Hence, the conclusion that fixing the encoder imposes zero penalty applies only within this scalar binary-signaling setting.

\section{Task-Aware Decoding Under Mismatch}
\label{sec:taskaware}

The framework developed in previous sections focused on MSE distortion. In many modern applications, however, the decoder is not judged by reconstruction fidelity alone but by a \emph{downstream task} that depends on the received information. In such settings, the mismatch between design and true source distributions persists, but the relevant distortion measure differs from or is at odds with MSE~\cite{blau2018perception}.

In this section, we extend the mismatched quantization formulation to general task-aware distortion measures. We show that the optimal decoder principle holds—the decoder should optimize against the true distribution—but the optimal reconstruction rule changes from a conditional mean to a task-specific estimator.

\subsection{General Task-Aware Decoder Optimization}

Let $Q_d$ denote a fixed encoder designed under the assumed distribution $f_X(\cdot;\theta_d)$, producing index $I=Q_d(X)$. Let $d_{\mathsf{task}}(x,\hat{x})$ be any task-specific distortion measure. The decoder seeks a mapping $g : \{1,\dots,N\} \to \mathbb{R}$ that minimizes the expected task loss under the \emph{true} distribution:
\begin{equation}
D_{\mathsf{task}}(g) = \mathbb{E}_{f_X(\cdot;\theta_t)} \big[ d_{\mathsf{task}}( X , g(I) ) \big].
\label{eq:general-task-loss}
\end{equation}
Conditioning on index $I=i$, the optimal decoder solves $N$ independent scalar minimizations:
\begin{equation}
g^*(i) = \arg\min_{a\in\mathbb{R}} \mathbb{E}_{\theta_t}\big[d_{\mathsf{task}}(X,a)\mid X\in\mathcal{R}_i\big].
\label{eq:task-optimal-reconstruction}
\end{equation}

\begin{figure*}[t]
    \centering
    \includegraphics[width=0.75\textwidth]{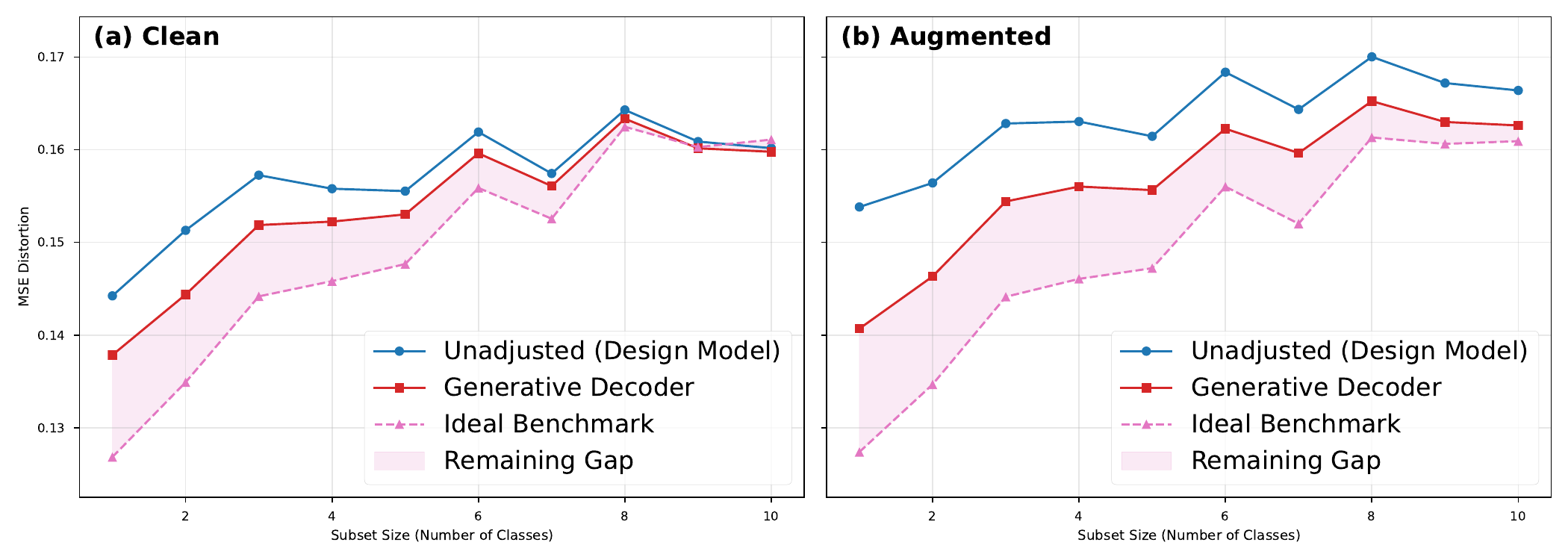}
    \caption{MSE distortion for 5 bits ($b=5$), comparing clean and augmented scenarios. The generative decoder closes a significant portion of the gap to the ideal benchmark despite the frozen encoder.}
    \label{fig:distortion_gap}
\end{figure*}

\subsection{CSI Compression for Beamforming and Site-Specific Decoder Adaptation}
\label{sec:csi-type2}

In this subsection, we extend the analysis to complex-valued channel coefficients to model wireless fading channels more accurately.

5G \gls{nr} CSI feedback employs a fixed, standardized linear-combination codebook designed to be robust for a wide range of multi-path and spatially correlated channel distributions. In real deployments, however, channels are highly site-specific. These deviations are known only to the gNB through long-term measurements, not to the UE encoder. While mathematical modeling of goal-oriented distortion over spatially correlated multi-path channels is generally intractable, it is valuable to derive analytical insights into the possible gain from a generative decoder using a tractable scalar proxy.

Recent work \cite{luo2025generative} demonstrated that freezing the standardized UE encoder and only retraining the gNB-side CSI refinement network on site-specific data reduces feedback overhead significantly at equal sum-rate. This serves as a large-scale empirical validation of generative decompression.

In the following, we provide analytical insight into the dominant mechanism driving these gains by modeling the critical quantity in CSI feedback: the complex coefficient of the dominant beam.

\subsubsection{Task Loss Model}
A simplified yet effective proxy for the loss in downlink beamforming gain (and hence sum-rate) due to imperfect CSI is the weighted MSE:
\begin{equation}
    d_{\mathsf{CSI}}(x,\hat{x}) = |x|^2 \, |x - \hat{x}|^2.
    \label{eq:CSI-loss}
\end{equation}
This penalizes reconstruction errors more heavily when the channel coefficient magnitude $|x|$ is large—i.e., the regime that contributes most to the received SNR in massive MIMO \cite{jindal2006mimo,bjornson2017massive}.
It is important to note that this scalar analysis primarily captures the accuracy of \emph{gain estimation} (power allocation) rather than the spatial direction (phase alignment) of the precoder, which relies on high-dimensional eigenvector quantization. However, since the dominant beam coefficient dictates the effective channel gain, minimizing the weighted magnitude error is a necessary condition for maximizing sum-rate.

\subsubsection{Site-Specific Model with Rician Dominant Beam Coefficient}
We model the dominant beam coefficient as complex Rician. Without loss of generality, we rotate the phase so that the \gls{los} component lies on the real axis. The coefficient $X$ is then
\begin{equation}
X \sim \mathcal{CN}\left( \sqrt{\frac{K}{K+1}},\ \frac{1}{K+1} \right),
\end{equation}
with unit average power $\mathbb{E}[|X|^2] = 1$.

The design quantizer is optimized for typical scenarios with low-to-moderate effective $K_d \approx 1$--$3$ in codebook design. The true, site-specific, channel has $K_t$ that can deviate from the design, e.g., $K_t > K_d$ (i.e., stronger LoS) which can be experienced in urban macro scenarios, or $K_t < K_d$ expected in more rich scattering scenarios.

\subsubsection{Optimal vs. Fixed Decoder Analysis}
To quantify the analytical gain, we define the reconstruction functions for the fixed (standardized) and adaptive (generative) strategies.

Let $\mathcal{M}_n(K) = \mathbb{E}[X^n \mid \mathrm{Re}(X)>0; K]$ denote the $n$-th conditional moment of the Rician distribution with parameter $K$. The task-optimal reconstruction centroid as a function of $K$ is:
\begin{equation}
    \phi(K) \triangleq \frac{\mathcal{M}_3(K)}{\mathcal{M}_2(K)}.
\end{equation}
We analyze the system mismatch where the standard decoder uses a fixed centroid $a_{\text{fix}} = \phi(K_d)$ optimized for the design assumption $K_d$, while the generative decoder uses the true optimal centroid $a^\star = \phi(K_t)$.

The relative performance gain is defined as:
\begin{equation}
    \eta(K_t) = 1 - \frac{\mathbb{E}_{K_t}[|X|^2 |X - \phi(K_t)|^2]}{\mathbb{E}_{K_t}[|X|^2 |X - \phi(K_d)|^2]}.
\end{equation}

We examine three regimes of the true channel $K_t$:

\paragraph{Rich Scattering Mismatch ($K_t \rightarrow 0$)}
Here, the site is rich-scattering (Rayleigh), but the codebook was designed for a dominant beam ($K_d>0$). The normalized true distribution ($K_t=0$) has moments $\mathcal{M}_2(K_t)=1$, $\mathcal{M}_3(K_t) = \kappa = 2\sqrt{2/\pi}$ and $\mathcal{M}_4(K_t)=3$.
    The weighted distortion function simplifies to a parabola:
    \begin{equation}
        D(a) =  3 - 2\kappa a + a^2.
    \end{equation}
The adaptive decoder selects the minimum of this parabola, $a^\star = \kappa$, while the fixed decoder is locked to the centroid of the design distribution, $a_{\text{fix}} = \phi(K_d) < \kappa$. Note that $\phi(K)$ is a decreasing function. The excess distortion caused by the fixed decoder is exactly then
\begin{equation}
    D(a_{\text{fix}}) - D(a^\star) = (\kappa - \phi(K_d))^2,
\end{equation}
Which provides the relative gain given as:
    \begin{equation}
        \eta(K_d) = \frac{(\kappa - \phi(K_d))^2}{3 - \kappa^2 + (\kappa - \phi(K_d))^2}.
    \end{equation}
   
The gain is driven purely by the rigidity of the fixed decoder. For a typical design of $K_d=3$, we have $\phi(3) \approx 1.3$, resulting in $ \eta(K_d) \approx 18 \%$.    
The adaptive decoder biases the reconstruction outward to capture the heavy tail of the Rayleigh distribution, yielding a constant relative gain factor derived from the ratio of Rician to Rayleigh tail integrals.

\paragraph{Mild Mismatch ($K_t \approx K_d$)}
Consider a moderate deviation where the true site has $K_t=6$ (stronger LoS) while the design assumes $K_d=3$.
Unlike the rich scattering case, the optimal centroid $\phi(K)$ is relatively stable in this region, where $\eta(6)$ yields a relative distortion reduction of approximately $8\%$, although this correction comes at no cost and accumulates across many subbands.

\paragraph{LoS Mismatch ($K_t \to \infty$)}
This is the case where a user has a pristine LoS path. As $K_t$ increases, the distribution $X$ converges to a Dirac delta at $\mu=1$ (unit power). The optimal reconstruction converges to the signal itself,
    \begin{equation}
        \lim_{K_t \to \infty} a^\star = \lim_{K_t \to \infty} \phi(K_t) = 1,
    \end{equation}
and, the adaptive distortion approaches zero: $D_{\text{opt}} \to |1|^2 |1 - 1|^2 = 0$. The fixed decoder remains anchored to $K_d$ and reconstructs $a_{\text{fix}} = \phi(K_d) > 1$. Consequently, $ \lim_{K_t \to \infty} \eta(K_t) = 100\%$. 

The optimal generative decoder automatically adapts to site-specific LoS strength: it exaggerates strong coefficients in rich-scattering/low-$K$ cases to maximize beamforming gain, and confidently collapses to the LoS value when $K_t$ is high. A conventional fidelity-oriented decoder cannot exhibit this dual behavior---it remains conservative because it is calibrated for the average case. This simple analytical model therefore provides the theoretical explanation for why purely decoder-side, site-specific refinement (whether closed-form or learned) yields such dramatic feedback savings in practice.

\begin{figure}[t]
    \centering
    \includegraphics[width=0.49\textwidth]{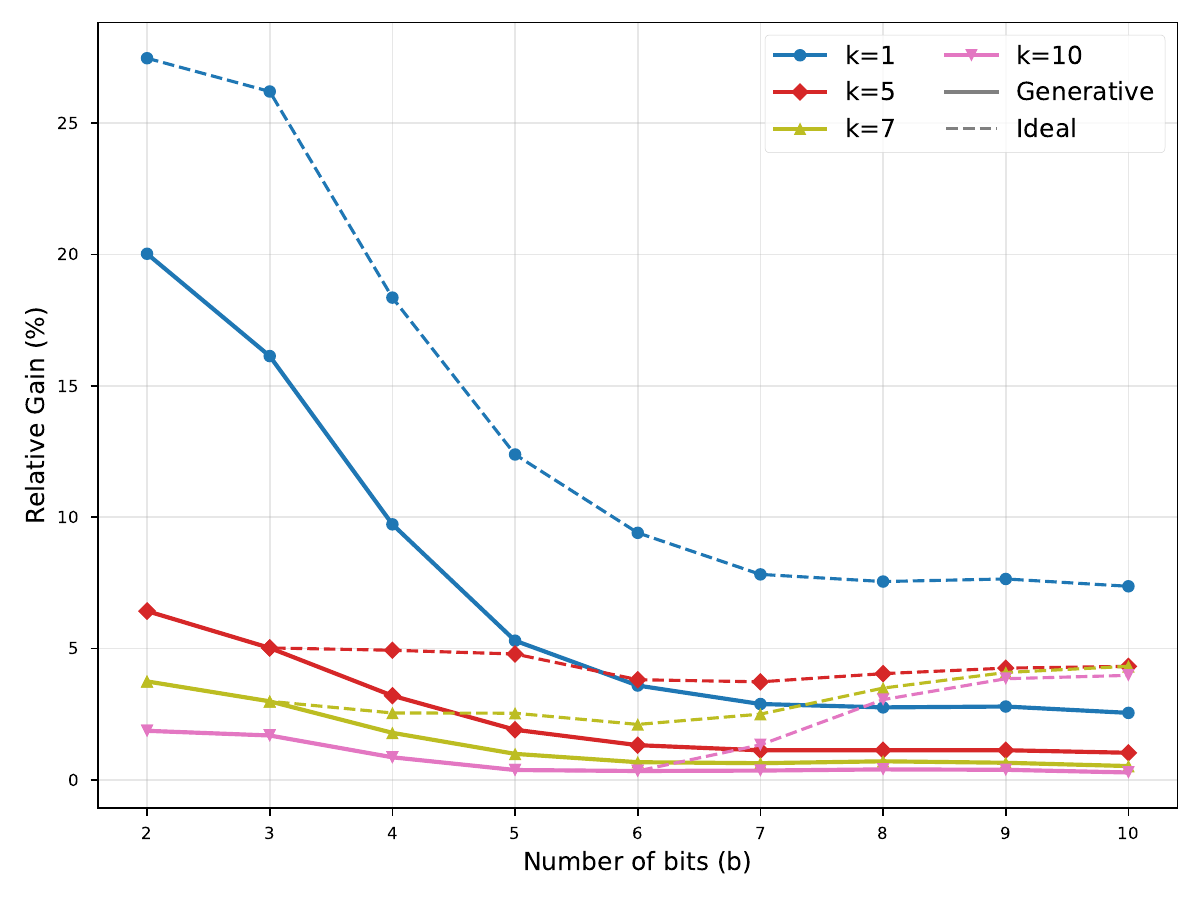}
    \caption{Relative reconstruction gain vs.\ codebook size (bits) for varying target subset sizes $k$. Adaptation gains are highest for severe mismatch (small $k$) and low bit rates.}
    \label{fig:gain_vs_bits_clean}
\end{figure}

\begin{figure}[ht]
    \centering
    \includegraphics[width=0.48\textwidth]{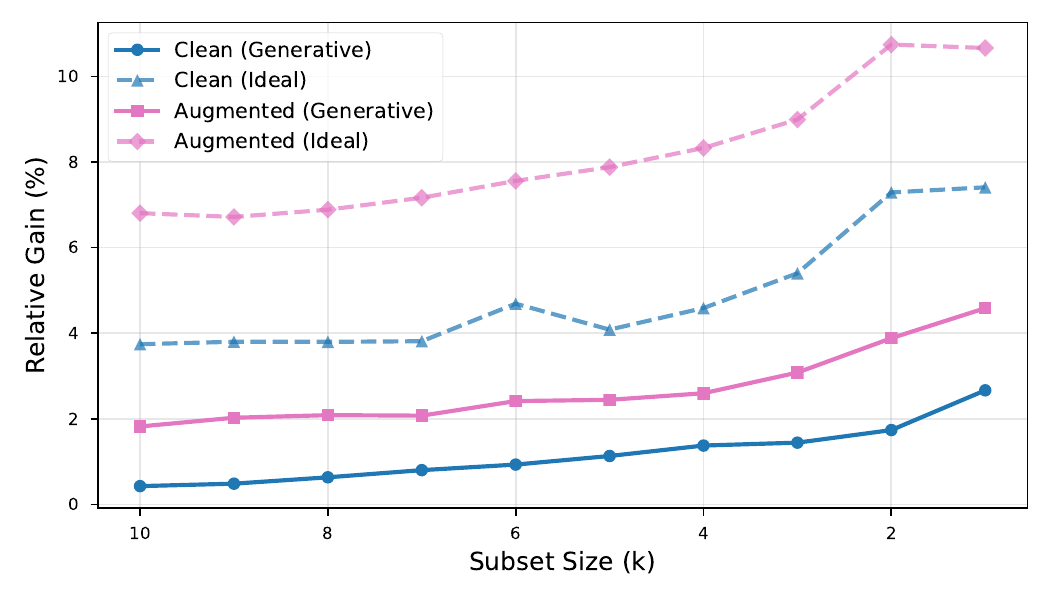}
    \caption{Reconstruction gain comparison between clean and augmented scenarios at 8 bits. The relative benefit of generative adaptation is robust to domain shifts (augmentation).}
    \label{fig:clean_vs_augmented}
\end{figure}

\subsection{Context-Aware Semantic Decoding}
\label{sec:semantic-decoding}

We now examine decoder-side adaptation when the receiver is interested in downstream semantic utility (classification accuracy) alongside signal fidelity.

\begin{figure*}[t]
    \centering
    \includegraphics[width=0.75\textwidth]{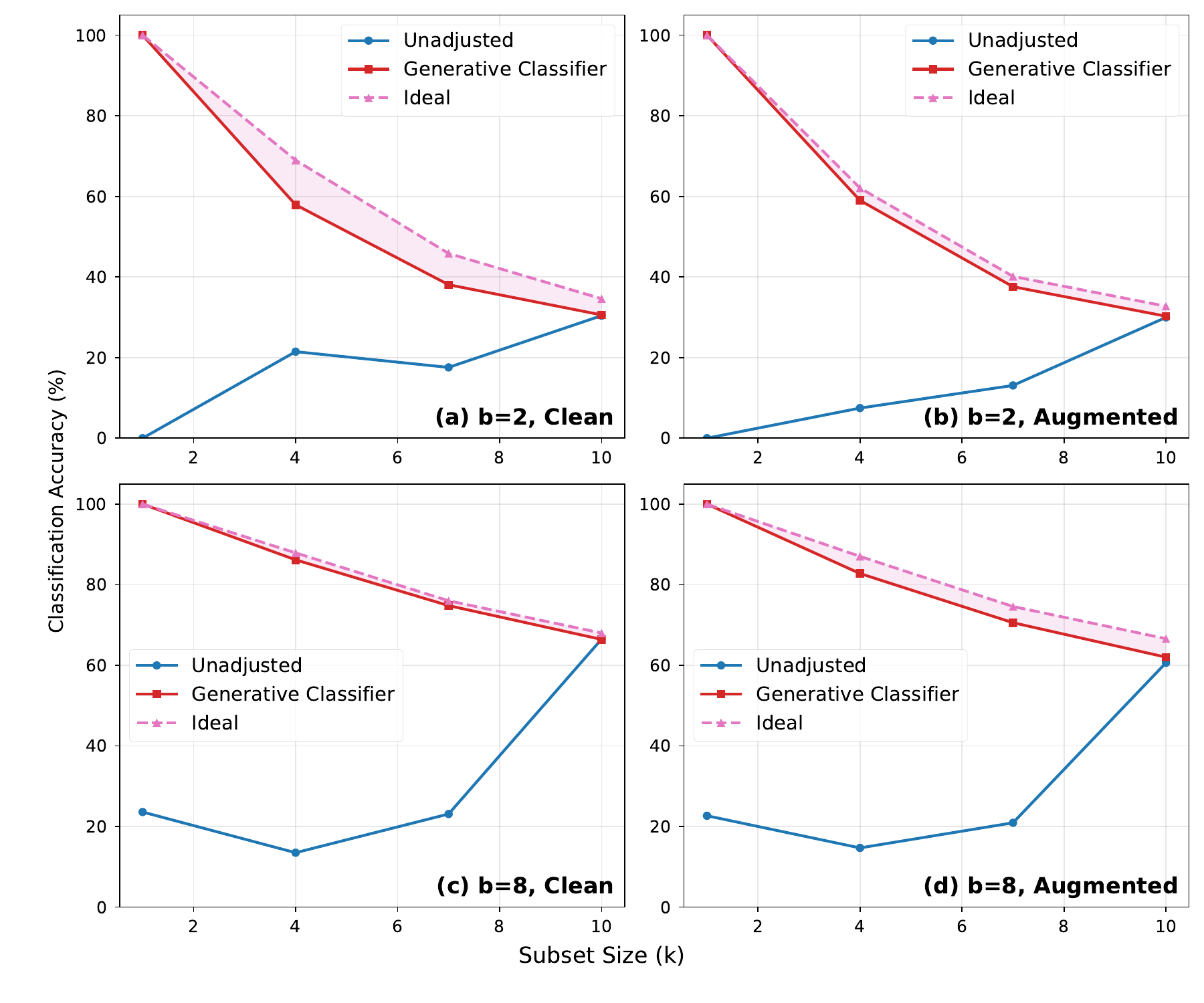}
    \caption{Classification accuracy vs.\ subset size $k$ for varying bit depths. The unadjusted baseline collapses for small $k$, while the generative classifier recovers significant accuracy.}
    \label{fig:accuracy_vs_subset}
\end{figure*}

\subsubsection{The Semantic Decision Rule}
While Section II established that the conditional mean is optimal for minimizing MSE, minimizing semantic misclassification requires a distinct decision strategy.

Let $Y \in \{1, \dots, M\}$ be the semantic label for source $X$. The decoder observes index $I = Q_d(X)$ and seeks a decision $\hat{y} = g(I)$ to minimize the error probability $P(\hat{Y} \neq Y)$.
It is a standard result in decision theory that the optimal rule is the \gls{map} estimator. In the context of generative decompression, this requires the decoder to evaluate the posterior under the \emph{true} distribution $P_t$:
\begin{equation}
    \hat{y}^*(i) = \arg \max_{y \in \{1,\dots,M\}} \mathbb{P}_t(Y=y \mid X \in \mathcal{R}_i).
    \label{eq:sem-optimal}
\end{equation}
This highlights a fundamental operational shift: whereas fidelity-oriented decoding (MSE) computes a \emph{weighted average} (Prop.~\ref{prop:optimal-recon-true}), task-oriented decoding performs a \emph{hard selection} based on the dominant class mass within the quantization bin $\mathcal{R}_i$.
Consequently, generative adaptation in the semantic regime does not strictly mean adjusting centroids; rather, it entails re-estimating the class priors $\mathbb{P}_t(Y|I)$ within each fixed bin—effectively what is achieved by retraining the classifier head on the target distribution.

\begin{figure*}[t]
    \centering
    \includegraphics[width=0.75\textwidth]{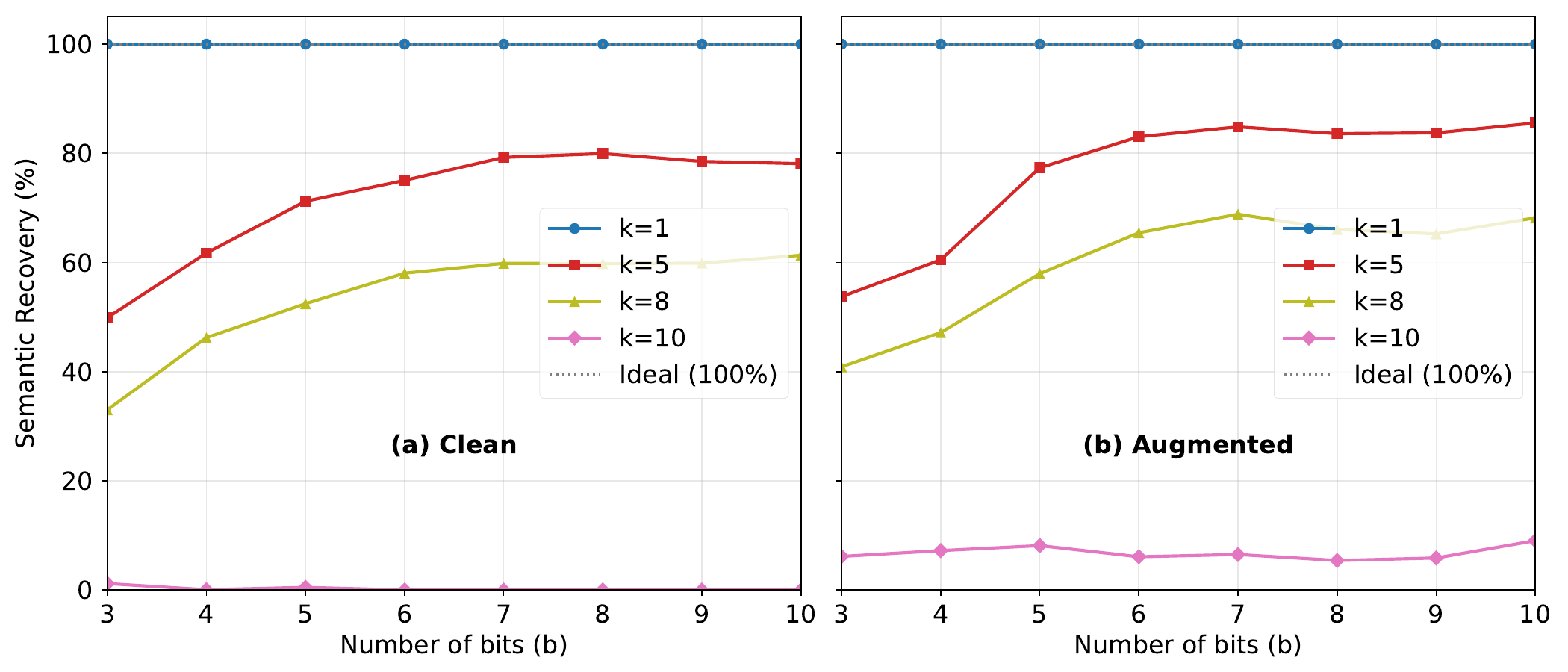}
    \caption{Semantic recovery (\%) vs.\ codebook size ($b$). Recovery is poor at low rates ($b=2$) due to the frozen encoder's structural bottleneck but improves significantly as codebook capacity increases.}
    \label{fig:semantic_gain_vs_bits}
\end{figure*}

\subsubsection{Experimental Setup}
To validate the generative decompression principle in a high-dimensional setting, we employ a VQ-VAE architecture on the Fashion-MNIST dataset~\cite{xiao2017fashionmnist}. We simulate \emph{semantic robustness} scenarios where the encoder is trained on a global mixture, but the decoder faces shifted distributions unknown to the transmitter. Throughout this section we distinguish between an \emph{Ideal Benchmark}, in which both encoder (including the codebook) and decoder are retrained end-to-end on the target distribution, and a \emph{Generative Decoder} regime, in which the encoder is frozen (pre-trained on the source distribution) and only the decoder parameters are adapted.

\textbf{Architecture \& Mismatch:} A fixed VQ-VAE maps images $X$ to latent vectors $z \in \mathbb{R}^{64}$, quantized to $N=2^b$ codebook entries ($b \in \{2,\dots,10\}$). Mismatch is induced via:
\begin{itemize}
    \item \emph{Class Restriction:} Only $k$ out of 10 classes appear at test time ($k \in \{1,\dots,10\}$), simulating a specialized deployment environment.
    \item \emph{Domain Shift:} Images undergo geometric and photometric augmentations (noise, rotation, scaling) not seen during encoder training.
\end{itemize}

\textbf{Adaptation Regimes:} We compare three strategies:
\begin{enumerate}
    \item \emph{Unadjusted Decoder:} The design model, trained on the full mixture, is evaluated directly on the target distribution.
    \item \emph{Generative Decoder:} The encoder uses the frozen design codebook $\mathcal{C}_{\text{enc}}$ to generate indices (fixing the partitions $\{\mathcal{R}_i\}$). The decoder utilizes a separate, decoupled codebook $\mathcal{C}_{\text{dec}}$. During adaptation, $\mathcal{C}_{\text{dec}}$ and the subsequent heads are fine-tuned via backpropagation on the target distribution to minimize MSE or classification error, while $\mathcal{C}_{\text{enc}}$ remains frozen.
    \item \emph{Ideal Benchmark:} The entire pipeline (encoder and decoder) is jointly re-optimized, providing an upper bound on performance.
\end{enumerate}

\textbf{Metrics:}
For subset size $k$ and bit depth $b$, we measure MSE distortion $D$ and classification accuracy $A$. The relative effectiveness of the proposed adaptation is quantified by the fraction of the \emph{ideal} gain recovered:
\begin{equation}
\label{eq:semantic_recovery}
\text{Semantic Recovery} = \frac{A_{\text{gen}} - A_{\text{fix}}}{A_{\text{ideal}} - A_{\text{fix}}}.
\end{equation}

\subsubsection{Experimental Results}

\paragraph{Reconstruction Fidelity}
We first analyze the signal fidelity capabilities of the generative decoder. Figure~\ref{fig:distortion_gap} visualizes the raw MSE distortion across the three regimes. The unadjusted decoder suffers from a systematic bias that does not vanish with rate, consistent with the theoretical prediction of a distortion floor under mismatch. The generative decoder lowers this floor, closing the majority of the gap to the ideal benchmark without modifying the encoder.

The relationship between codebook capacity and adaptation gain is detailed in Fig.~\ref{fig:gain_vs_bits_clean}. The relative gain decreases monotonically with bit depth $b$. This trend suggests that as $b$ decreases, and thus the distortion increases, there is more room for the generative decoder to repurpose codewords that are unused by the restricted class distribution, effectively increasing the resolution for the active classes. The ideal benchmark performs better in higher rate where adjusting the codewords on the encoder side provide the extra gain.

Figure~\ref{fig:clean_vs_augmented} demonstrates the robustness of this approach under domain shifts. While augmented images naturally incur higher absolute distortion, the \emph{relative} recovery remains consistent with the clean scenario. This confirms that the Bayesian correction principle (Section~\ref{sec:problem}) is robust to complex, non-Gaussian shifts; even when the source is perturbed by noise or rotation, the decoder-side adaptation effectively adjusts the quantization rule.

\paragraph{Semantic Inference}
We now examine the performance when the decoder maximizes a downstream utility function (accuracy) rather than fidelity.

Figure~\ref{fig:accuracy_vs_subset} presents the absolute classification accuracy as a function of the subset size $k$. A distinct performance collapse is observed for the unadjusted decoder (blue curves) as the subset size decreases. This is expected, as the fixed classifier operates under the prior assumption of a balanced 10-class mixture, leading to false positives for classes not present in the target $k$-subset. The generative classifier (red curves) mitigates this by effectively re-estimating the class priors $P(Y|\hat{I})$ within each quantization bin. While a gap remains compared to the ideal benchmark (magenta curves), the generative adaptation recovers a critical portion of the usable accuracy.

To quantify this recovery, Figure~\ref{fig:semantic_gain_vs_bits} plots the semantic recovery metric defined in \eqref{eq:semantic_recovery}. Semantic recovery exhibits a strong dependence on both bit depth and subset size. At low rates ($b=2$), recovery is notably poor. This represents a \emph{structural bottleneck}: the partitions determined by the frozen encoder $\mathcal{C}_{\text{enc}}$ are too coarse to separate classes, fusing samples from different labels into the same bin. In this regime, no decoder-side re-weighting can disentangle the classes. However, as bit depth increases ($b \ge 4$), the encoder partitions become fine enough that the decoder's updated priors can effectively distinguish the active classes, leading to high recovery rates. Expectedly, recovery is almost at $0\%$ at $k=10$,  since the generative decoder has no room to improve upon the unadjusted classifier given the uniform distribution over classes.

\section{Conclusion}
\label{sec:conclusion}
This work establishes a theoretical foundation for \emph{generative decompression}, a paradigm shifting the burden of adaptation from the resource-constrained or standard-constrained transmitter to the intelligent receiver. We proved that when the encoder is fixed or mismatched—a common constraint in standardized cellular protocols—the optimal decoding strategy is a Bayesian correction that aligns reconstruction with the true source distribution.

Our analysis yields three critical insights for future communication systems. First, we showed that distribution mismatch acts as a distortion floor, which can be reduced solely through improved decoder-side priors. Second, we rigorously quantified the inefficiency of standard modular source-channel separation architectures under mismatch with a fixed encoder, deriving a robust soft-decoding rule that connects classical quantization and modern joint source-channel coding (JSCC) while remaining consistent with Shannon's separation theorem under its usual assumptions. Third, the efficacy of this framework in high-dimensional tasks, such as semantic classification and CSI feedback compression, suggests that generative decompression is a key enabler for \emph{context-aware communications}. As 6G systems move toward AI-native interfaces, this framework provides the analytical justification for deploying powerful, generative models at the receiver to compensate for coarse, standardized transmission.

\bibliographystyle{IEEEtran}
\bibliography{references}

\end{document}